\theoremstyle{plain}
\newtheorem{theorem}{Theorem}[section]
\newtheorem{proposition}[theorem]{Proposition}
\newtheorem{lemma}[theorem]{Lemma}
\newtheorem{corollary}[theorem]{Corollary}
\newtheorem{conjecture}{Conjecture}[section]
\theoremstyle{definition}
 \newtheorem{example}{Example}[section]
 \newtheorem{remark}{Remark}[section]
\renewenvironment{proof}[1][\proofname]{    {\noindent \textit{Proof.}}
}{{\hfill{$\square$}}}
\newcommand{\F}{\mathbb{F}}
\def\rank{{\rm rank}}
\begin{document}

\title{Linear $\ell$-Intersection  Pairs of Codes and Their Applications}

\author{Kenza Guenda, T.~Aaron Gulliver,
        Somphong  Jitman\\ and Satanan Thipworawimon\thanks{K. Guenda and
        T. A. Gulliver  are with the  Department of Electrical and Computer Engineering, University of
        Victoria, PO Box 1700, STN CSC, Victoria, BC, Canada V8W 2Y2.
        Email: {kguenda@uvic.ca, agullive@ece.uvic.ca}.
        S. Jitman  and S. Thipworawimon are with the  Department of Mathematics, Faculty of Science, Silpakorn University,
        Nakhon Pathom 73000, Thailand.
 Email: {sjitman@gmail.com, thipworawimon\_s@su.ac.th}.
          }}

\maketitle

\begin{abstract}
   In this paper,  a linear  $\ell$-intersection pair of codes  is introduced as  a generalization of linear complementary pairs  of codes.
Two linear codes are said to be a linear $\ell$-intersection pair if their intersection has dimension $\ell$.
Characterizations and constructions  of such pairs of codes are given in terms of the corresponding generator and parity-check matrices.
Linear  $\ell$-intersection pairs of MDS codes over $\mathbb{F}_q$ of length up  to $q+1$ are given for all possible parameters.
As an application, linear $\ell$-intersection pairs of codes  are used to construct entanglement-assisted quantum error correcting codes.
This provides a large number of new MDS entanglement-assisted quantum error correcting codes.

\end{abstract}

 \noindent {\bf Keywords}: {Linear complementary pairs, Linear $\ell$-intersection pairs,   Generalized Reed-Solomon codes,  Entanglement-assisted quantum error correcting codes}

\noindent{\bf MSC 2010}: {94B05, 81P45,  81P70}

 \section{Introduction}

Linear complementary pairs (LCPs) of codes have been of interest and extensively studied due to their rich algebraic structure and wide applications in cryptography.
For example, in \cite{carlet1} and \cite{carlet2},  it was shown that these pairs of codes can be used to counter passive and active side-channel analysis attacks on embedded cryptosystems.
Several construction of LCPs of codes were also given.

In this paper,   we introduce a linear  $\ell$-intersection pair of codes as  a generalization of the LCP of codes in \cite{carlet2}.
Two linear codes are said to be a linear  $\ell$-intersection pair if their intersection has dimension $\ell$.
A characterization of such pairs of codes is given in terms of  generator and parity-check matrices of codes.
%As a special case, a characterization of LCPs of  codes is given in terms of their generator and parity-check matrices.
We construct $\ell$-intersection pairs from generalized Reed-Solomon and extended  generalized Reed-Solomon codes.
Linear  $\ell$-intersection pairs of maximum distance separable (MDS) codes over $\mathbb{F}_q$ of length up  to $q+1$ are given for all possible parameters.
As an application, linear $\ell$-intersection pairs of codes  are employed to construct  entanglement-assisted quantum error correcting codes.
This gives a large family of  new  MDS  entanglement-assisted quantum error correcting codes.

The remainder of this paper is organized as follows.
Some properties, characterizations and constructions of linear $\ell$-intersection pairs of codes are given in  Section~\ref{sect2}.
In Section~\ref{sectionMDS}, constructions  of linear $\ell$-intersection pairs of MDS codes are discussed.
The application of linear  $\ell$-intersection pairs in the construction of entanglement-assisted quantum error correcting codes
is considered in Section~\ref{sectApp}.
Some concluding remarks are given in Section~\ref{secConclud} along with some open problems.

\section{$\ell$-Intersection Pairs of Codes} \label{sect2}

Let $\F_q$ denote the finite field of $q$ elements where $q$ is a prime power.
For positive integers $k\leq n$ and $d$,
an $[n,k,d]_q $ {\em linear code} is defined to be a $k$-dimensional subspace of $\F_q^{n}$ with minimum Hamming distance $d$.
An $[n,k,d]_q$ code is called {\em maximum distance separable} (MDS) if the parameters satisfy $d=n-k+1$.
The {Euclidean inner product} between $\boldsymbol{u}=(u_1,u_2,\dots, u_n)$ and $\boldsymbol{v}=(v_1,v_2,\dots, v_n)$  in $\mathbb{F}_q$ is defined as
\[
\langle \boldsymbol{u}, \boldsymbol{v} \rangle=\sum _{i=1}^n u_iv_i .
\]
The {\em (Euclidean) dual}  $C^{\perp}$ of a linear code $C$ is defined as
\[
C^{\perp}=\{  \boldsymbol{u} \in \mathbb{F}_q^n \mid \langle \boldsymbol{u}, \boldsymbol{c} \rangle =0 \text{ for all } \in C\}.
\]
A linear code  $C$ is called {\em linear complementary dual} (LCD) if $C\cap C^\perp=\{\boldsymbol{0}\}$.  Linear codes  $C$ and $D$  of length $n$ over $\mathbb{F}_q$ are  called a {\em linear complementary pair} (LCP) if  $C\cap D =\{\boldsymbol{0}\} $ and $C+D=\mathbb{F}_q^n$.

For an integer $\ell\geq 0$, linear codes  $C$ and $D$  of length $n$ over $\mathbb{F}_q$ are  called a {\em linear $\ell$-intersection pair} if  $\dim(C\cap D) =\ell$.
From the definition above, we have the following observations.
\begin{itemize}
    \item A linear $0$-intersection pair with $\dim(C)+\dim(D)=n$ is an LCP (see \cite{carlet2}).
    \item A linear $0$-intersection pair with $D=C^\perp$ is an LCD code (see \cite{M1992}).
    \item The study of a linear  $\ell$-intersection pair  with $D=C^\perp$ is equivalent to that of the hull $Hull(C):=C\cap C^\perp$ of $C$ (see \cite{GJG}).
\end{itemize}
Therefore, linear $\ell$-intersection pairs of codes can be viewed as a generalization of LCPs of codes, LCD codes, and the hulls of codes.

\subsection{Characterizations of Linear $\ell$-Intersection Pairs of Codes}\label{ssec2.1}
In this section,  properties of linear $\ell$-intersection pairs of codes are established in terms of their generator and parity-check matrices.
In some cases, links between this concept and known families of codes such as complementary dual codes, self-orthogonal codes, and linear complementary pairs of codes,
as well as hulls of codes, are discussed.

\begin{theorem}  \label{thmGLCP}
    For $i\in\{1,2\}$, let $C_i$ be a linear $[n,k_i]_{q}$ code with parity check matrix $H_i$ and generator matrix $G_i$.
    If $C_1$ and $C_2$ are a linear  $\ell$-intersection pair, then  $\rank(H_1G_2^t)$ and $\rank(G_1H_2^t)$ are independent of $H_i$ and $G_i$ so that
    \[
    \rank(G_1H_2^t)= \rank(H_2G_1^t) =k_1- \ell,
    \]
    and
    \[
    \rank(G_2H_1^t)= \rank(H_1G_2^t)=k_2- \ell.
    \]
\end{theorem}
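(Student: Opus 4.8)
The plan is to reduce everything to a single rank computation and then invoke symmetry. First I would observe that for any matrix $A$ one has $\rank(A)=\rank(A^t)$, and since $(G_1H_2^t)^t=H_2G_1^t$ and $(G_2H_1^t)^t=H_1G_2^t$, the two ``transpose'' equalities $\rank(G_1H_2^t)=\rank(H_2G_1^t)$ and $\rank(G_2H_1^t)=\rank(H_1G_2^t)$ hold automatically. Thus it suffices to show $\rank(G_1H_2^t)=k_1-\ell$; the claim $\rank(G_2H_1^t)=k_2-\ell$ then follows verbatim by interchanging the roles of $C_1$ and $C_2$, using that $\dim(C_1\cap C_2)=\dim(C_2\cap C_1)=\ell$.

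To compute $\rank(G_1H_2^t)$, I would analyze its left null space $N=\{\boldsymbol{y}\in\F_q^{k_1}: \boldsymbol{y}\,G_1H_2^t=\boldsymbol{0}\}$, since $G_1H_2^t$ has $k_1$ rows and so $\rank(G_1H_2^t)=k_1-\dim N$ by rank--nullity. Because $G_1$ has full row rank $k_1$, the map $\boldsymbol{y}\mapsto \boldsymbol{y}G_1$ is injective and hence an isomorphism from $\F_q^{k_1}$ onto $C_1$. The crux is the chain of equivalences: $\boldsymbol{y}\in N$ if and only if $(\boldsymbol{y}G_1)H_2^t=\boldsymbol{0}$, if and only if $H_2(\boldsymbol{y}G_1)^t=\boldsymbol{0}$, if and only if $\boldsymbol{y}G_1\in C_2$, where the last step is precisely the defining property of the parity-check matrix $H_2$ of $C_2$. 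Since $\boldsymbol{y}G_1$ always lies in $C_1$, this says $\boldsymbol{y}\in N$ exactly when $\boldsymbol{y}G_1\in C_1\cap C_2$.

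Consequently the isomorphism $\boldsymbol{y}\mapsto\boldsymbol{y}G_1$ restricts to an isomorphism from $N$ onto $C_1\cap C_2$, so $\dim N=\dim(C_1\cap C_2)=\ell$ and therefore $\rank(G_1H_2^t)=k_1-\ell$. As a sanity check on the dimensions, note $k_1-\ell=\dim(C_1+C_2)-k_2\le n-k_2$, which is exactly the number of columns of $G_1H_2^t$, so the computed rank is admissible. Independence from the choices of $G_i$ and $H_i$ is then immediate, because the final value $k_1-\ell$ is written purely in terms of the intrinsic invariants $k_1=\dim C_1$ and $\ell=\dim(C_1\cap C_2)$, neither of which depends on the particular generator or parity-check matrix selected.

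I expect the only delicate point to be the bookkeeping in the middle of the argument: keeping straight that $\boldsymbol{y}G_1$ is a codeword of $C_1$ while the condition $H_2(\boldsymbol{y}G_1)^t=\boldsymbol{0}$ encodes membership in $C_2$, and confirming that the full-row-rank property of $G_1$ makes $\boldsymbol{y}\mapsto\boldsymbol{y}G_1$ injective, so that $N$ is genuinely \emph{isomorphic} to $C_1\cap C_2$ rather than merely mapping onto it. Once that identification is secured, the rest is routine rank--nullity together with the symmetry of the hypothesis.
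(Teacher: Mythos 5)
Your proof is correct and rests on the same key observation as the paper's: for $\boldsymbol{c}\in C_1$, the condition $\boldsymbol{c}H_2^t=\boldsymbol{0}$ holds exactly when $\boldsymbol{c}\in C_1\cap C_2$, so the kernel of right-multiplication by $H_2^t$ on $C_1$ has dimension $\ell$. You package this as a single rank--nullity computation on the left null space of $G_1H_2^t$, whereas the paper extends a basis of $C_1\cap C_2$ to one of $C_1$ and shows by contradiction that the remaining $k_1-\ell$ rows yield a full-rank block; the content is identical and your version is, if anything, a little cleaner.
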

\begin{proof}
    Assume that $C_1$ and $C_2$ are a linear  $\ell$-intersection pair.
    First, we prove that $ \rank(G_1H_2^t)= \rank(H_2G_1^t) =k_1- \ell$.
    Since $(G_1H_2^t)^t= H_2G_1^t$, it suffices to show that $\rank(G_1H_2^t) =k_1- \ell$.
    
    Since $\ell=\dim(C_1\cap C_2)$,  we have  $n\geq \dim(C_1+C_2) = k_1+k_2-\ell$ which
    implies that  $n- k_2\geq k_1-\ell$ and  $n- k_1\geq k_2-\ell$ .
    Let  $B=\{g_1, g_2,\dots, g_\ell\}$ be a basis of $C_1\cap C_2$.
    If $\ell=k_1 $, then $B\subseteq C_2$ and   $G_1H_2^t=[0]$, and hence $\rank(G_1H_2^t)=0=k_1-\ell$ as desired.
    Assume that  $\ell<k_1 $ and extend  $B$  to be a basis
    $\{g_1,g_2,$ $\dots,g_\ell,$ $g_{\ell+1},\dots, g_{k_1}\}$ for $C_1$.
    Then
    \[
    J_1=\left[
    \begin{array}{c} g_1\\g_2\\
    \vdots \\
    g_\ell \\
    g_{\ell+1}\\
    \vdots\\
    g_{k_1}
    \end{array}
    \right]
    \]
    is a generator  matrix for  $C_1$.
    Applying a suitable sequence of elementary row operations gives an invertible $k_1\times k_1$ matrix $A$ over
    $\mathbb{F}_{q}$ such  that $G_1=AJ_1$ and hence
    \[
    G_1H_2^t= AJ_1H_2^t.
    \]
    Since $A$ is invertible, we have
    \begin{align} \label{eq-rank1}
    \rank(G_1H_2^t)&=\rank(J_1H_2^t).
    \end{align}
    As $g_i\in C_2 $  for all  $i=1,2,\dots, \ell$,  we have  $g_iH_2^t=0 $  for all  $i=1,2,\dots, \ell$ so then
    \begin{align*}
    J_1H_2^t=  \left[
    \begin{array}{c}
    0\\
    \hline
    \left[
    \begin{array}{c}
    g_{\ell+1}\\
    \vdots\\
    g_{k_1}
    \end{array}
    \right] H_2^t
    \end{array}
    \right].
    \end{align*}
    The matrix
    $\left[
    \begin{array}{c}
    g_{\ell+1}\\
    \vdots\\
    g_{k_1}
    \end{array}
    \right] H_2^t$
    has dimensions $(k_1-\ell)\times (n-k_2)$  with   $ n-k_2 \geq k_1-\ell$ so it follows that
    \[\rank\left(\left[
    \begin{array}{c}
    g_{\ell+1}\\
    \vdots\\
    g_{k_1}
    \end{array}
    \right] H_2^t\right) \leq  (k_1-\ell).
    \]
    Suppose that  $\rank\left(\left[
    \begin{array}{c}
    g_{\ell+1}\\
    \vdots\\
    g_{k_1}
    \end{array}
    \right] H_2^t\right) <   k_1-\ell$.
    Then there exists a non-zero vector  $\boldsymbol{u}\in \mathbb{F}_q^{k_1-\ell}$ such that
    \begin{align*} \label{eq-inner0} \boldsymbol{u}\left[
    \begin{array}{c}
    g_{\ell+1}\\
    \vdots\\
    g_{k_1}
    \end{array}
    \right] H_2^t   =[0],
    \end{align*}
    so then   $\boldsymbol{u} \left[
    \begin{array}{c}
    g_{\ell+1}\\
    \vdots\\
    g_{k_1}
    \end{array}
    \right]  \in C_2 \setminus \{\boldsymbol{0}\}$.
    Since  ${\rm span}\{ g_{\ell+1} ,g_{\ell+2},\dots,g_{k_1} \} \cap   C_2 =\{\boldsymbol{0}\} $,  we have $\boldsymbol{u} \left[
    \begin{array}{c}
    g_{\ell+1}\\
    \vdots\\
    g_{k_1}
    \end{array}
    \right] \notin C_2$,
    which is a contradiction.
    Therefore,
    $\rank(G_1H_2^t)=\rank(J_1H_2^t)=k_1-\ell$ which is independent of $G_1$ and $H_2$ as required.
    By swapping $C_1$ and $C_2$, it can be deduced that
    $\rank(G_2H_1^t)= \rank(H_1G_2^t)=k_2- \ell$.
\end{proof}

In the case where the sum of the two codes covers the entire space $\mathbb{F}_q^n$, we have the following corollary.
\begin{corollary}
    For $i\in\{1,2\}$, let $C_i$ be a linear $[n,k_i]_{q}$ code with parity check matrix $H_i$ and generator matrix $G_i$.
    If $C_1$ and $C_2$ are a  linear $\ell$-intersection pair such that $C_1+C_2 =\mathbb{F}_q^n$, then
    \[
    \rank(G_1H_2^t)= \rank(H_2G_1^t) =n-k_2=k_1- \ell,
    \]
    and
    \[
    \rank(G_2H_1^t)= \rank(H_1G_2^t)=n-k_1=k_2- \ell.
    \]
\end{corollary}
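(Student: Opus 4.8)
The plan is to reduce everything to Theorem~\ref{thmGLCP} by converting the covering hypothesis $C_1+C_2=\mathbb{F}_q^n$ into a numerical identity among $n$, $k_1$, $k_2$, and $\ell$. First I would invoke the standard dimension formula for the sum of two subspaces,
\[
\dim(C_1+C_2)=\dim(C_1)+\dim(C_2)-\dim(C_1\cap C_2).
\]
Since $C_1$ and $C_2$ form a linear $\ell$-intersection pair, the intersection term equals $\ell$, while $\dim(C_i)=k_i$. The assumption $C_1+C_2=\mathbb{F}_q^n$ forces the left-hand side to equal $n$, so I obtain $n=k_1+k_2-\ell$. Rearranging yields the two identities $n-k_2=k_1-\ell$ and $n-k_1=k_2-\ell$.

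With these identities in hand, the rank statements follow at once from Theorem~\ref{thmGLCP}. That theorem already guarantees, for any linear $\ell$-intersection pair, that $\rank(G_1H_2^t)=\rank(H_2G_1^t)=k_1-\ell$ and $\rank(G_2H_1^t)=\rank(H_1G_2^t)=k_2-\ell$, and that these ranks are independent of the chosen generator and parity-check matrices. Substituting $k_1-\ell=n-k_2$ and $k_2-\ell=n-k_1$ then produces the claimed chains of equalities directly.

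Because the corollary is a direct specialization of Theorem~\ref{thmGLCP}, I anticipate no genuine obstacle in the argument; the only substantive step is the bookkeeping that turns the covering condition $C_1+C_2=\mathbb{F}_q^n$ into the dimension equation $n=k_1+k_2-\ell$. One point worth recording explicitly is that this condition is exactly what forces the inequalities $n-k_2\geq k_1-\ell$ and $n-k_1\geq k_2-\ell$ used in the proof of the theorem to hold with equality, so the corollary may also be read as identifying the extremal case in which the sum of the two codes fills the entire ambient space $\mathbb{F}_q^n$.
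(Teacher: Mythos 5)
Your proposal is correct and follows exactly the paper's own argument: translate $C_1+C_2=\mathbb{F}_q^n$ into $n=k_1+k_2-\ell$ via the dimension formula, rearrange, and apply Theorem~\ref{thmGLCP}. The closing observation that this is the extremal case of the inequalities in the theorem's proof is accurate but not needed.
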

\begin{proof}
    Since $C_1+C_2 =\mathbb{F}_q^n$, we have that $n= k_1+k_2- \ell$.
    Then $n-k_2=k_1- \ell$ and $n-k_1=k_2- \ell$, and the results follow from Theorem \ref{thmGLCP}.
\end{proof}

By setting $\ell=0$ in the above corollary, we have the following characterization of LCPs of codes.
\begin{corollary}
    For $i\in\{1,2\}$, let $C_i$ be a linear $[n,k_i]_{q}$ code with parity check matrix $H_i$ and generator matrix $G_i$.
    If $C_1$ and $C_2$ are  a LCP, then
    \[
    \rank(G_1H_2^t)= \rank(H_2G_1^t)  =k_1,
    \]
    and
    \[
    \rank(G_2H_1^t)= \rank(H_1G_2^t)=k_2.
    \]
\end{corollary}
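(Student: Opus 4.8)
The plan is to recognize that this corollary is nothing more than the preceding one specialized to $\ell=0$, so the main work is simply to identify an LCP with the appropriate kind of $\ell$-intersection pair. Recall from the definitions in this section that linear codes $C_1$ and $C_2$ form a linear complementary pair exactly when $C_1\cap C_2=\{\boldsymbol{0}\}$ and $C_1+C_2=\mathbb{F}_q^n$. The first of these two conditions says precisely that $\dim(C_1\cap C_2)=0$, i.e., that $(C_1,C_2)$ is a linear $\ell$-intersection pair with $\ell=0$; the second condition is exactly the hypothesis $C_1+C_2=\mathbb{F}_q^n$ required by the preceding corollary. This translation between the pairwise description of an LCP and the $\ell$-intersection-pair language is already recorded in the observations following the definition of linear $\ell$-intersection pairs, so no new argument is needed to justify it.

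With this identification in hand, I would simply invoke the preceding corollary with $\ell=0$. That corollary yields $\rank(G_1H_2^t)=\rank(H_2G_1^t)=k_1-\ell$ and $\rank(G_2H_1^t)=\rank(H_1G_2^t)=k_2-\ell$; substituting $\ell=0$ turns the right-hand sides into $k_1$ and $k_2$ respectively, which are precisely the four equalities claimed in the statement.

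There is essentially no obstacle here, since the entire content of the statement is already contained in the preceding corollary and the proof reduces to a single substitution. The only point to be careful about is the harmless bookkeeping that an LCP satisfies both defining hypotheses of that corollary (zero intersection dimension and full sum), but this is immediate from the definition.
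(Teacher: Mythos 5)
Your proposal is correct and follows exactly the paper's route: the paper states this corollary by setting $\ell=0$ in the preceding corollary, after noting that an LCP is precisely a linear $0$-intersection pair with $C_1+C_2=\mathbb{F}_q^n$. Your bookkeeping of the two defining conditions of an LCP matches what the paper leaves implicit.
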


In the case where $C_2$ is the dual code of $C_1$, we have $C_1\cap C_2=Hull(C_1)=Hull(C_2)$ and  the following
result in \cite{GJG} can be obtained from  Theorem \ref{thmGLCP}.
\begin{corollary}[{\cite[Proposition 3.1]{GJG}}]
    \label{cor-hull}
    Let $C$ be an $[n,k]_q$ code over $\mathbb{F}_q$ with generator matrix $G$ and parity-check matrix $H$.
    Then
    \[
    \rank(HH^t)=n-k-\dim(Hull(C)),
    \]
    and
    \[
    \rank(GG^t)=k- \dim(Hull(C)).
    \]
\end{corollary}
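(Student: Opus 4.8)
The plan is to derive both identities as the special case $C_2 = C_1^{\perp}$ of Theorem \ref{thmGLCP}. The key dictionary is the standard duality fact that a generator matrix of $C^{\perp}$ may be taken to be a parity-check matrix of $C$, and a parity-check matrix of $C^{\perp}$ may be taken to be a generator matrix of $C$. This lets me feed the data of a single code $C$ into the two-code statement of Theorem \ref{thmGLCP}.

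Concretely, I would set $C_1 = C$, so that $G_1 = G$, $H_1 = H$, and $k_1 = k$, and set $C_2 = C^{\perp}$, which has dimension $k_2 = n - k$. By the duality dictionary above, I may take $G_2 = H$ as a generator matrix of $C_2$ and $H_2 = G$ as a parity-check matrix of $C_2$. Since $C_1 \cap C_2 = C \cap C^{\perp} = Hull(C)$, the intersection dimension is $\ell = \dim(Hull(C))$, so $C$ and $C^{\perp}$ form a linear $\ell$-intersection pair with this value of $\ell$, and Theorem \ref{thmGLCP} applies.

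It then remains to substitute into the two conclusions of Theorem \ref{thmGLCP}. The equality $\rank(G_1 H_2^t) = k_1 - \ell$ becomes $\rank(G G^t) = k - \dim(Hull(C))$, and the equality $\rank(G_2 H_1^t) = k_2 - \ell$ becomes $\rank(H H^t) = (n-k) - \dim(Hull(C))$. These are precisely the two displayed identities of the corollary, so no further computation is needed.

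The one point that warrants care—and the main, though mild, obstacle—is the independence clause of Theorem \ref{thmGLCP}: the theorem guarantees that $\rank(G_1 H_2^t)$ does not depend on the particular choice of generator and parity-check matrices. This is exactly what justifies evaluating the rank with the convenient choice $G_1 = G$ and $H_2 = G$ and reading off the value $\rank(G G^t)$ for an arbitrary generator matrix $G$ of $C$, and likewise $\rank(H H^t)$ for an arbitrary parity-check matrix $H$. Once this independence is invoked, the corollary follows immediately.
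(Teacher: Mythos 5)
Your proposal is correct and follows exactly the route the paper intends: the corollary is obtained by specializing Theorem \ref{thmGLCP} to $C_1=C$, $C_2=C^{\perp}$ (so that $G_2=H$, $H_2=G$, $k_2=n-k$, and $\ell=\dim(Hull(C))$), and the two displayed identities drop out of the two rank formulas. The paper gives no further argument beyond this specialization, so your write-up, including the remark on the independence of the ranks from the choice of $G$ and $H$, matches its proof.
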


The well-known properties of self-orthogonal codes and complementary dual codes \cite{Mac} follow easily from Corollary \ref{cor-hull}.
\begin{corollary}
    Let $C$ be an $[n,k]_q$ code over $\mathbb{F}_q$ with generator matrix $G$ and parity-check matrix $H$.
    Then the following statements hold.
    \begin{enumerate}
        \item $C$ is self-orthogonal if and only if  $GG^t=[0]$.
        \item $C$ is complementary dual  if and only if  $GG^t$ is invertible.
        In this case, $HH^t$ is invertible.
    \end{enumerate}
\end{corollary}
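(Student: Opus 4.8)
The plan is to derive both statements directly from Corollary~\ref{cor-hull}, which expresses $\rank(GG^t)$ and $\rank(HH^t)$ in terms of $\dim(Hull(C))$. The key observation is that $Hull(C)=C\cap C^\perp$ is always a subspace of $C$, so $0\leq \dim(Hull(C))\leq k$, and the two extreme values of this dimension correspond exactly to self-orthogonality and to the complementary dual property.

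For the first statement, I would note that $C$ is self-orthogonal precisely when $C\subseteq C^\perp$, which is equivalent to $Hull(C)=C\cap C^\perp=C$. Since $Hull(C)\subseteq C$ and $\dim(C)=k$, this equality holds if and only if $\dim(Hull(C))=k$. Substituting into the formula $\rank(GG^t)=k-\dim(Hull(C))$ from Corollary~\ref{cor-hull}, self-orthogonality is therefore equivalent to $\rank(GG^t)=0$, i.e.\ to $GG^t=[0]$.

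For the second statement, $C$ is complementary dual means $C\cap C^\perp=\{\boldsymbol{0}\}$, that is $\dim(Hull(C))=0$. By the same formula this is equivalent to $\rank(GG^t)=k$; since $GG^t$ is a $k\times k$ matrix, having full rank $k$ is exactly invertibility, which yields the claimed equivalence. For the final assertion, when $\dim(Hull(C))=0$ the companion identity $\rank(HH^t)=n-k-\dim(Hull(C))$ gives $\rank(HH^t)=n-k$, and since $HH^t$ is an $(n-k)\times(n-k)$ matrix, this means $HH^t$ is invertible.

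Since each step is a substitution into Corollary~\ref{cor-hull} combined with the elementary fact that a square matrix has full rank if and only if it is invertible, I do not expect a genuine obstacle here. The only point requiring a moment's care is translating the dimension condition $\dim(Hull(C))=k$ in the first part back into self-orthogonality, which relies on the containment $Hull(C)\subseteq C$ together with $\dim(C)=k$.
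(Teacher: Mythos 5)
Your proposal is correct and follows exactly the route the paper intends: the paper gives no explicit proof, stating only that the corollary ``follows easily from Corollary~\ref{cor-hull},'' and your argument is precisely that derivation, translating self-orthogonality into $\dim(Hull(C))=k$ and the complementary dual property into $\dim(Hull(C))=0$ and substituting into the rank formulas. Nothing is missing.
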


\begin{remark}
    In general, we may relate a linear $\ell$-intersection pair of codes with the Galois dual of a linear code \cite{FZ2017}.
    For $q=p^e$  and $0\leq h<e$, the {\em $p^h$-inner product  (Galois inner product)} between
    $\boldsymbol{u}=(u_1,u_2,\dots, u_n)$ and $\boldsymbol{v}=(v_1,v_2,\dots, v_n)$  in $\mathbb{F}_q$ is defined to be
    \[
    \langle \boldsymbol{u}, \boldsymbol{v} \rangle_h=\sum _{i=1}^n u_iv_i^{p^h}.
    \]
    The {\em $p^h$-dual (Galois dual)}  $C^{\perp_h}$ of a linear code $C$ is defined as
    \[
    C^{\perp_h}=\{  \boldsymbol{u} \in \mathbb{F}_q^n \mid \langle \boldsymbol{u}, \boldsymbol{c} \rangle_h =0 \text{ for all } \in C\}.
    \]
    Note that $C^{\perp_0}$ is the Euclidean dual $C^\perp$.
    If $e$ is even,  the $C^{\perp_{\frac{e}{2}}}$ is the well-known Hermitian dual.
    
    Using statements similar to those in the proof of  Theorem  \ref{thmGLCP},  the following  result can be concluded.
    For $i\in\{1,2\}$, let $C_i$ be a linear $[n,k_i]_{q}$ code with generator matrix $G_i$ and let $H_i$ be a generator matrix for the Galois dual $C_i^{\perp_h}$.
    If $C_1$ and $C_2$ are a linear $\ell$-intersection pair then
    \[
    \rank(G_1H_2^*)= \rank(H_2G_1^*) =k_1- \ell,
    \]
    and
    \[
    \rank(G_2H_1^*)= \rank(H_1G_2^*)=k_2- \ell,
    \]
    where $A^*=[a_{ji}^{p^h}]$ for a matrix $A=[a_{ij}]$ over $\mathbb{F}_q$.
\end{remark}

\subsection{Constructions of Linear $\ell$-Intersection Pairs of Codes}

In this subsection, a discussion on constructions of  linear $\ell$-intersection pairs is given. 
We note that  constructions of     linear $0$-intersection pairs of  linear codes $C_1$ and $C_2$  with $\dim(C_1)+\dim(C_2)=n$,   LCPs of codes,    have been given in  \cite{carlet2}. Various  constructions of 
linear $0$-intersection pairs  of  linear codes  $C_1$ and $C_2=C_1^\perp$,    LCD codes,  have  been  discussed in  \cite{CG2016}, \cite{CMTQ2018}, \cite{Jin2017},  \cite{M1992} and  \cite{QZ2018}. Constructions of some  linear codes with prescribed hull dimension have been given in  \cite{GJG} and
\cite{LC2018}.

Based on the characterizations given in Subsection \ref{ssec2.1}, the value $\ell$ for which    two linear codes of   length $n$  over  $\mathbb{F}_q$  form a linear $\ell$-intersection pair can be easily determined. Here, constructions of  linear $\ell$-intersection pairs will be given using the concept of equivalent codes and some propagation rules.

We recall that 
two  linear  codes  of length $n$ over $\mathbb{F}_q$ are {\em equivalent} if one can be obtained from the other by a combination of operations of the following types: (i) permutation of the $n$ digits of the codewords; (ii) multiplication of the symbols appearing in a fixed position by a nonzero  element in $\mathbb{F}_q$.  A square  matrix over $\mathbb{F}_q$  is called a  {\it weighted  permutation matrix}  if it  has exactly one nonzero entry  in each row and each column and $0$s elsewhere. It is not difficult to see that  linear codes $C_1$ and $C_2$ of length $n$ over $\mathbb{F}_q$ are  equivalent if and only if there exists an $n\times n$ weighted  permutation matrix $A$  over $\mathbb{F}_q$ such that $C_2=\{\boldsymbol{c} A\mid \boldsymbol{c} \in C_1\}$.   This characterization  is useful in constructions of linear $\ell$-intersection pairs.

\begin{lemma}\label{rem1} Let $C_1$  and $C_2$ be $[n,k_1,d_1]_q$ and $[n,k_2,d_2]_q$  codes, respectively. Let $A$ be an $n\times n$ weighted permutation matrix over $\mathbb{F}_q$ and let $G_1$ and $H_2$ be a generator matrix of $C_1$ and a parity-check matrix of $C_2$, respectively. Then their exists a linear $\ell$-intersection pair of $[n,k_1,d_1]_q$ and $[n,k_2,d_2]_q$  codes, where $\ell=k_1-\rank(G_1AH_2^t)$. 
\end{lemma}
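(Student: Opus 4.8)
The plan is to produce the required pair explicitly by twisting $C_1$ with the given weighted permutation matrix $A$, and then to read off the intersection dimension directly from Theorem~\ref{thmGLCP}. Concretely, I would set $C_1' = \{\boldsymbol{c}A \mid \boldsymbol{c} \in C_1\}$ and take $(C_1', C_2)$ as the candidate pair.

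First I would invoke the characterization of equivalent codes recalled just above the statement: since $A$ is a weighted permutation matrix, $C_1'$ is equivalent to $C_1$. Because equivalence is generated by coordinate permutations and nonzero scalings of fixed positions, it preserves length, dimension, and minimum Hamming distance; hence $C_1'$ is again an $[n,k_1,d_1]_q$ code, while $C_2$ already has the prescribed parameters $[n,k_2,d_2]_q$. This settles the parameter claim for both members of the pair.

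Next I would check that $G_1 A$ is a legitimate generator matrix of $C_1'$. A weighted permutation matrix is invertible over $\F_q$ (its determinant is, up to sign, the product of its $n$ nonzero entries), so right-multiplication by $A$ is a bijection of $\F_q^n$. Consequently the $k_1$ rows of $G_1 A$ remain linearly independent and span exactly $C_1'$, so $\rank(G_1 A) = k_1$. Finally, letting $\ell' = \dim(C_1' \cap C_2)$, I apply Theorem~\ref{thmGLCP} to the linear $\ell'$-intersection pair $(C_1', C_2)$, using $G_1 A$ as a generator matrix of $C_1'$ and $H_2$ as a parity-check matrix of $C_2$; this yields $\rank\bigl((G_1 A) H_2^t\bigr) = k_1 - \ell'$, whence $\ell' = k_1 - \rank(G_1 A H_2^t) = \ell$. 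Thus $(C_1', C_2)$ is the desired linear $\ell$-intersection pair.

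I do not anticipate a serious obstacle: the argument is a direct combination of the equivalence characterization with Theorem~\ref{thmGLCP}. The only point needing care is verifying that $G_1 A$ is a full-rank generator matrix of $C_1'$, which is exactly where the \emph{weighted permutation} hypothesis does the work rather than an arbitrary matrix would; its invertibility simultaneously guarantees that $C_1'$ has the same parameters as $C_1$ and that the twisted generator matrix stays nonsingular on its row space, so that Theorem~\ref{thmGLCP} applies verbatim.
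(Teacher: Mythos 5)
Your proposal is correct and follows the same route as the paper: form $C_1'$ as the code generated by $G_1A$, note it is equivalent to $C_1$ and hence an $[n,k_1,d_1]_q$ code, and apply Theorem~\ref{thmGLCP} to the pair $(C_1',C_2)$ to read off $\ell = k_1 - \rank(G_1AH_2^t)$. Your extra verification that $G_1A$ is a full-rank generator matrix of $C_1'$ is a detail the paper leaves implicit, but the argument is otherwise identical.
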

\begin{proof} Let $C_1'$ be  the linear code  generated by $G_1A$. By the discussion above, $C'_1$ is equivalent to $C_1$. Hence, $C_1'$ is an $[n,k_1,d_1]_q$ code.   By  Theorem \ref{thmGLCP},  $C_1'$ and $C_2$ form a  linear $\ell$-intersection pair of $[n,k_1,d_1]_q$ and $[n,k_2,d_2]_q$  codes, where $\ell=k_1-\rank((G_1A)H_2^t) =k_1-\rank(G_1AH_2^t)$. 
\end{proof}

In Lemma \ref{rem1}, the value $\ell$ depends on the choices of $A$.   In applications,    a suitable weighted permutation matrix $A$ is required. Illustrative examples are given as follows.

\begin{example} \label{exwp} Let   $C_1$ and $C_2$ be $[7,4,3]_2$ and $[7,3,4]_2$ codes with generator matrices 
    \[ G_1= \begin{bmatrix}
    1 &0 &0 &0 &0 &1& 1\\
    0 &1 &0 &0 &1 &0 &1\\
    0 &0 &1& 0 &1& 1 &0\\
    0 &0& 0& 1& 1 &1& 1 \end{bmatrix}
    \text{ and } G_2=\begin{bmatrix}
    1&0 &1& 0&1&0& 1\\
    0 &1 &1 &0 &0& 1& 1\\
    0& 0 &0 &1 &1& 1 &1
    \end{bmatrix}.
    \]     
    Using the computer algebra system MAGMA  \cite{BCP1997} and Theorem \ref{thmGLCP}, it can be seen  that  $C_1$ and $C_2$ form a linear $3$-intersection pair of  $[7,4,3]_2$ and $[7,3,4]_2$ codes.

    Let \[ A_1=\begin{bmatrix}
    1&0&0&0&0&0&0\\
    0&1&0&0&0&0&0\\
    0&0&1&0&0&0&0\\
    0&0&0&1&0&0&0\\
    0&0&0&0&1&0&0\\
    0&0&0&0&0&0&1\\
    0&0&0&0&0&1&0
    \end{bmatrix},~~
    \begin{bmatrix}
    0&1&0&0&0&0&0\\
    1&0&0&0&0&0&0\\
    0&0&1&0&0&0&0\\
    0&0&0&1&0&0&0\\
    0&0&0&0&1&0&0\\
    0&0&0&0&0&0&1\\
    0&0&0&0&0&1&0
    \end{bmatrix}
    \text{ and }
    \begin{bmatrix}
    0&0&0&0&0&0&1\\
    1&0&0&0&0&0&0\\
    0&0&1&0&0&0&0\\
    0&0&0&1&0&0&0\\
    0&0&0&0&1&0&0\\
    0&1&0&0&0&0&0\\
    0&0&0&0&0&1&0
    \end{bmatrix}
    \] be $7\times 7$ (weighted) permutation matrices over $\mathbb{F}_2$. 
    Let $C_1'$, $C_1''$  and $C_1'''$ be linear codes generated by $G_1A_1$, $G_1A_2$  and $G_1A_3$, respectively. 
    Using the computer algebra system MAGMA  \cite{BCP1997} and  Lemma \ref{rem1},  we have that $C_1'$ and $C_2$ form a linear $2$-intersection pair of  $[7,4,3]_2$ and $[7,3,4]_2$ codes,  $C_1''$ and $C_2$ form a linear $1$-intersection pair of  $[7,4,3]_2$ and $[7,3,4]_2$ codes, and $C_1'''$ and $C_2$ form a linear $0$-intersection pair of  $[7,4,3]_2$ and $[7,3,4]_2$ codes.
\end{example}

Next,     useful   recursive  constructions of linear $\ell$-intersection pairs are  given.

\begin{theorem} \label{thm2.2} Let  $\ell\geq 0$ be an integer. If there exists a linear $\ell$-intersection pair of  $[n,k_1,d_1]_{q}$ and    $[n,k_2,d_2]_{q}$ codes,  then the following statements hold.
    \begin{enumerate}
        \item   There exists a linear $\gamma$-intersection pair of  $[n,k_1,d_1]_{q}$ and    $[n,k_2-\ell+\gamma ,D_2]_{q}$ codes for all $0\leq \gamma \leq \ell$, where   $D_2\geq d_2$.
        \item   There exists a linear $\gamma$-intersection pair of  $[n+\ell-\gamma,k_1,d_1]_{q}$ and    $[n+\ell-\gamma,k_2 ,D_2]_{q}$ codes for all $0\leq \gamma \leq \ell$, where   $D_2\geq d_2$.
    \end{enumerate}
    
\end{theorem}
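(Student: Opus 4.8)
The plan is to prove both statements by explicit construction, starting from a fixed linear $\ell$-intersection pair $C_1,C_2$ and manipulating $C_2$ (and, in part~2, the ambient length) so as to lower the intersection dimension from $\ell$ to any prescribed $\gamma\leq \ell$. The common first move is to fix a basis $\{g_1,\dots,g_\ell\}$ of $C_1\cap C_2$ and extend it to a basis $\{g_1,\dots,g_\ell,h_1,\dots,h_{k_2-\ell}\}$ of $C_2$; the key fact I will exploit repeatedly is that a vector of $C_1\cap C_2={\rm span}\{g_1,\dots,g_\ell\}$ lies in a prescribed subspace of $C_2$ precisely when its (unique) coordinates with respect to this basis satisfy the defining conditions of that subspace.

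For part~1, I would define $C_2'={\rm span}\{g_1,\dots,g_\gamma,h_1,\dots,h_{k_2-\ell}\}$, a subcode of $C_2$ of dimension $k_2-\ell+\gamma$. Since $C_2'\subseteq C_2$ forces $C_1\cap C_2'\subseteq C_1\cap C_2={\rm span}\{g_1,\dots,g_\ell\}$, I then check that an element $\sum_{i=1}^\ell a_ig_i$ of the latter lies in $C_2'$ only when $a_{\gamma+1}=\dots=a_\ell=0$, by uniqueness of coordinates in the basis of $C_2$; hence $C_1\cap C_2'={\rm span}\{g_1,\dots,g_\gamma\}$ has dimension exactly $\gamma$. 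The distance bound $D_2\geq d_2$ is immediate because $C_2'$ is a subcode of $C_2$, and $C_1$ is untouched, giving the desired pair of $[n,k_1,d_1]_q$ and $[n,k_2-\ell+\gamma,D_2]_q$ codes.

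For part~2, set $m=\ell-\gamma$ and append $m$ new coordinates: let $\tilde C_1$ be generated by $[G_1\mid 0]$ (so $\tilde C_1$ is $C_1$ with $m$ zeros appended) and let $\tilde C_2$ be generated by $[G_2\mid M_2]$, where the rows of $G_2$ are $g_1,\dots,g_\ell,h_1,\dots,h_{k_2-\ell}$ and $M_2$ is the $k_2\times m$ matrix whose first $m$ rows form $I_m$ and whose remaining rows are zero. Both extended generator matrices retain full row rank, so $\dim\tilde C_1=k_1$ and $\dim\tilde C_2=k_2$, the new length is $n+\ell-\gamma$, appending zeros preserves the distance $d_1$ of $\tilde C_1$, and every nonzero codeword of $\tilde C_2$ projects to a nonzero codeword of $C_2$ so that $D_2\geq d_2$. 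To compute the intersection I would observe that $(xG_2,xM_2)\in\tilde C_1$ forces $xG_2\in C_1\cap C_2$ and $xM_2=0$; the first condition forces $x$ to be supported on $g_1,\dots,g_\ell$, and the second, through the chosen $M_2$, kills the first $m$ of those coordinates, leaving a $\gamma$-dimensional space of admissible $x$ that maps injectively (as $G_2$ has full rank) onto $\tilde C_1\cap\tilde C_2$.

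The routine parts are the dimension and distance bookkeeping; the step that demands the most care, and where I expect the only real risk of error, is the exact computation of $\dim(\tilde C_1\cap\tilde C_2)$ in part~2, because one must simultaneously enforce membership in $C_1\cap C_2$ (which confines $x$ to the $g$-coordinates) and the vanishing $xM_2=0$ (which trims those coordinates), and then verify that the composite linear conditions cut the parameter space down to dimension exactly $\gamma$ rather than merely at least $\gamma$. Choosing $M_2$ with an identity block of the right size is precisely what guarantees the rank needed to make this count tight.
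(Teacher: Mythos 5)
Your proposal is correct and follows essentially the same route as the paper: part~1 drops $\ell-\gamma$ of the intersection-basis vectors from the extended basis of $C_2$, and part~2 pads the codes with extra coordinates so as to separate the unwanted intersection vectors. The only difference is that the paper appends one coordinate at a time (marking $v_\ell$ with a $1$) and iterates, whereas you append all $\ell-\gamma$ coordinates at once via an identity block, which lets you verify $\dim(\tilde C_1\cap\tilde C_2)=\gamma$ in a single explicit computation rather than by induction.
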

\begin{proof} Assume that there exists a linear $\ell$-intersection pair of  $[n,k_1,d_1]_{q}$ and    $[n,k_2,d_2]_{q}$ codes, denoted by $C_1$ and $C_2$, respectively. Let $A=\{\boldsymbol{v}_1, \boldsymbol{v}_2,\dots, \boldsymbol{v}_{\ell}\}$ be a basis of  $C_1\cap C_2$. Let $B_1$ and $B_2$ be bases of $C_1$ and $C_2$ extended respectively from $A$.  For $\gamma=\ell$, the two statements are obvious.  Assume that  $0\leq \gamma <\ell$. 
    
    To prove 1, let $C_2'$ be the  linear code  generated by $B_2\setminus \{\boldsymbol{v}_1, \boldsymbol{v}_2,\dots, \boldsymbol{v}_{\ell-\gamma}\}$.  Then   $C_2'$  is an  $[n, k_2-\ell+\gamma]_q$ code.  Since $C_2'$ is a subcode of $C_2$,  we have $d(C_2')=D_2$  for some $D_2\geq d_2$. It is clear that $C_1$ and $C_2'$ form  a linear $\gamma$-intersection pair.

    To prove 2,   let  $\varphi_1: B_1 \to \mathbb{F}_q^{n+1}$  and $\varphi_2:  B_2 \to \mathbb{F}_q^{n+1}$ be concatenated maps  defined by 
    \[ \varphi_1( \boldsymbol{u})= 
    \boldsymbol{u}|0  \]  for all $ \boldsymbol{u}\in B_1$, and 
    
    \[ \varphi_2( \boldsymbol{u})=\begin{cases}
    \boldsymbol{u}|1 & \text{ if } \boldsymbol{u}=\boldsymbol{v}_{\ell},\\
    \boldsymbol{u}|0 & \text{  otherwise}
    \end{cases}\]
    for all $ \boldsymbol{u}\in B_2$.   Let $C_1' $ and $C_2'$ be the linear codes generated by $\varphi_1(B_1)$ and  $\varphi_2(B_2)$. Clearly, $C_1'$ and $C_2'$ form a linear $(\ell-1)$-intersection pair of  $[n+1,k_1,d_1]_{q}$ and    $[n+1,k_2 ,D_2]_{q}$ codes for some  $D_2\geq d_2$.  Continue this process, a linear $\gamma$-intersection pair of  $[n+\ell-\gamma,k_1,d_1]_{q}$ and    $[n+\ell-\gamma,k_2 ,D_2]_{q}$ codes can be constructed for all $0\leq \gamma < \ell$, where   $D_2\geq d_2$.
\end{proof}

Based on the characterizations given in Subsection \ref{ssec2.1}, Lemma \ref{rem1}  and the best  known linear codes in  \cite{G2019}, some linear $\ell$-intersection pair of  good codes over small finite fields can be constructed using the following steps:

\begin{enumerate}[~~~~1)]
    \item Fix two best known linear   codes  $C_1$ and $C_2$ of length $n$ over $\mathbb{F}_q$ from \cite{G2019}.
    \item Fix an $n\times n $ weighted permutation matrix $A$ over $\mathbb{F}_q$.
    \item Compute $C_1'=\{\boldsymbol{c} A\mid \boldsymbol{c} \in C_1\}$.
    \item Compute  the value $\ell$ for which  $C_1'$ and $C_2$  form a linear $\ell$-intersection pair using Lemma \ref{rem1}.   \\ Output: linear $\ell$-intersection pair.
    \item Apply recursive constructions given in Theorem \ref{thm2.2}.  \\
    Output:  linear $\gamma$-intersection pair, where $0\leq \gamma \leq \ell$.
\end{enumerate}

We note that  a linear $\ell$-intersection pair of linear codes with best known parameters is obtained in Step 4   while the minimum distance of the second code in   a linear $\gamma$-intersection pair  obtained in  Step 5  might be lower than the best known ones.

Using basic  linear algebra,  we have  the following result.
\begin{lemma} \label{rage-l} If  there exists a linear $\ell$-intersection pair of $[n,k_1,d_1]_q$ and  $[n,k_2,d_2]_q$ codes, then $k_1+k_2-n\leq \ell \leq \min\{k_1,k_2\}$. 
\end{lemma}
Note that  Lemma \ref{rage-l}  does not guarantee the existence of   a linear $\ell$-intersection pair of $[n,k_1,d_1]_q$ and  $[n,k_2,d_2]_q$ codes for all $\ell$ satisfying  $k_1+k_2-n\leq \ell \leq \min\{k_1,k_2\}$. For the existence of such pairs,   we propose the following conjecture.

\begin{conjecture} \label{cojecture}
    There exists a linear $\ell$-intersection pair of $[n,k_1,d_1]_q$ and  $[n,k_2,d_2]_q$ codes for all $\ell$ satisfying  $k_1+k_2-n\leq \ell \leq \min\{k_1,k_2\}$ provided that there exist $[n,k_1,d_1]_q$ and  $[n,k_2,d_2]_q$ codes.
\end{conjecture}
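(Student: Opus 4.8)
The plan is to fix the prescribed minimum distances by working inside a single monomial-equivalence class, so that only the intersection dimension is allowed to vary. The recursive constructions of Theorem \ref{thm2.2} change either the length $n$ or the dimension $k_2$, so they cannot be applied when all of $n,k_1,k_2,d_1,d_2$ are held fixed; the right tool is instead the equivalence set-up of Lemma \ref{rem1}. Concretely, let $C_2$ be a fixed $[n,k_2,d_2]_q$ code with parity-check matrix $H_2$, and let $C_1$ be a fixed $[n,k_1,d_1]_q$ code with generator matrix $G_1$; both exist by hypothesis. For a permutation $\sigma\in S_n$ with permutation matrix $P_\sigma$, the permuted code $C_1^\sigma$ is again an $[n,k_1,d_1]_q$ code, since permutation equivalence preserves the weight distribution, so by Lemma \ref{rem1} the pair $(C_1^\sigma,C_2)$ realizes the intersection dimension $\ell(\sigma)=k_1-\rank(G_1P_\sigma H_2^{t})$. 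Writing $f(\sigma)=\rank(G_1P_\sigma H_2^{t})$, it therefore suffices to show that, for a suitable choice of $C_1$ and $C_2$, the integer-valued function $f$ attains every value in $[\max\{0,k_1-k_2\},\min\{k_1,n-k_2\}]$ as $\sigma$ ranges over $S_n$; by Lemma \ref{rage-l} this interval corresponds exactly to the admissible range of $\ell$.

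The structural heart of the argument is a discrete intermediate-value principle. Any transposition $\tau=(i\,j)$ can be written as $P_\tau=I_n+\epsilon\,ww^{t}$, where $w=e_i-e_j$ and $\epsilon=-1$ (over $\F_2$ one has $w=e_i+e_j$ and $\epsilon=1$); in either case $P_\tau-I_n$ has rank one. Hence for $\sigma'=\sigma\tau$,
\[
G_1P_{\sigma'}H_2^{t}=G_1P_\sigma H_2^{t}+\epsilon\,(G_1P_\sigma w)(H_2w)^{t},
\]
a rank-one update, so $|f(\sigma\tau)-f(\sigma)|\le 1$. Because $S_n$ is generated by transpositions and its Cayley graph on these generators is connected, the set $\{f(\sigma):\sigma\in S_n\}$ is an interval of consecutive integers. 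Consequently, once two permutations are exhibited attaining $f=\max\{0,k_1-k_2\}$ and $f=\min\{k_1,n-k_2\}$ within one orbit, every intermediate value, and hence every admissible $\ell$, is automatically realized.

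It then remains to secure the two endpoints for a single pair $(C_1,C_2)$. The maximal value $f=\min\{k_1,n-k_2\}$ asserts that $G_1P_\sigma H_2^{t}$ has full rank for some $\sigma$, i.e.\ that a permuted copy of $C_1$ meets $C_2$ as transversally as the dimensions permit; I regard this as the \emph{generic} endpoint, close in spirit to the known existence results for LCPs in \cite{carlet2}, and would establish it by a transversality or counting argument ruling out full-rank failure for every $\sigma$ unless $C_1,C_2$ are degenerate. The genuinely hard endpoint is the maximal intersection $\ell=\min\{k_1,k_2\}$: assuming $k_1\le k_2$, it forces $C_1^\sigma\subseteq C_2$, so that $C_2$ must contain a permuted copy of a good $[n,k_1,d_1]_q$ code while itself retaining minimum distance exactly $d_2$.

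The main obstacle is precisely this nested endpoint. Since the minimum distance of any subcode is at least that of the ambient code, placing a subcode of distance $d_1$ inside a code of distance $d_2$ with \emph{both} distances prescribed is exactly a question about the existence of good nested codes (code chains with prescribed distances), which is itself a difficult and largely open problem. Thus the conjecture reduces, via the equivalence set-up and the rank-one step above, to the existence of such nested pairs, with the free middle range expected to follow routinely once the two extremes are in hand. A complementary line of attack would choose $C_2$ from the outset as a good code admitting a good $[n,k_1,d_1]_q$ subcode, so that the identity permutation already gives $\ell=\min\{k_1,k_2\}$, and then verify that some permutation of that subcode is transverse to $C_2$; the residual difficulty, and the reason the statement remains a conjecture, is guaranteeing that the ambient distance $d_2$ and the subcode distance $d_1$ can be made simultaneously optimal.
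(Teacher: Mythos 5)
You are attempting to prove what the paper itself presents as an open conjecture: the authors prove it only in the special case of MDS codes over $\F_q$ with $q\geq 3$ and length at most $q+1$ (Proposition \ref{prop-MDS-pair}), where GRS and extended GRS codes give complete control — choosing $P(x)=f(x)L(x)$ and $Q(x)=h(x)L(x)$ with $f,L,h$ irreducible of degrees $k_1-\ell,\ \ell,\ k_2-\ell$, the intersection is exactly ${\rm GRS}(\boldsymbol{a},L(x),\boldsymbol{v})$ and so has dimension $\ell$ by construction. Your proposal takes the different route the paper only hints at (the remark after Conjecture \ref{cojecture} that Lemma \ref{rem1} ``might be useful''), and its one solid contribution is correct and genuinely nice: the rank-one update $G_1P_{\sigma\tau}H_2^{t}=G_1P_\sigma H_2^{t}+\epsilon\,(G_1P_\sigma w)(H_2w)^{t}$ together with connectivity of the Cayley graph of $S_n$ on transpositions does show that $\{k_1-\rank(G_1P_\sigma H_2^{t}) : \sigma\in S_n\}$ is an interval of consecutive integers, so within a fixed pair of equivalence classes only the two extreme values need to be realized. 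But this is a reduction, not a proof, and you concede as much: both endpoint claims are left unestablished.

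Both gaps are fatal, and one of them cannot be repaired at all in the stated generality. The nested endpoint $\ell=\min\{k_1,k_2\}$ is, as you say, the existence problem for code chains with both minimum distances prescribed — an open problem of essentially the same difficulty as the conjecture, so the reduction buys nothing there. Worse, the ``generic/transversal'' endpoint $\ell=\max\{0,k_1+k_2-n\}$ is not provable by any genericity or counting argument, because it is false for admissible parameters: take $q=2$, $n=3$, $k_1=k_2=1$, $d_1=d_2=3$. The unique $[3,1,3]_2$ code is the repetition code $\{000,111\}$, so \emph{every} pair of such codes is the pair (repetition, repetition), $\rank(G_1P_\sigma H_2^{t})=0$ for all $\sigma$, and $\ell=0$ — which lies in the range $k_1+k_2-n\leq\ell\leq\min\{k_1,k_2\}$ — is unattainable. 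This shows the conjecture as literally stated already fails for $q=2$ (consistently, the paper's partial result requires $q\geq 3$), and it shows that the degeneracy caveat you wave at (``unless $C_1,C_2$ are degenerate'') is not a technicality: it is exactly where the content of the problem lives. Finally, even where both endpoints do hold, your intermediate-value argument needs them to hold \emph{for the same pair} $(C_1,C_2)$, with one code moved inside a single (weighted) permutation orbit; realizing $\ell=\min\{k_1,k_2\}$ via $C_1^\sigma\subseteq C_2$ and $\ell=\max\{0,k_1+k_2-n\}$ via another permutation of the same $C_1$ against the same $C_2$ is an additional unverified compatibility requirement. So the proposal should be read as a plausible strategy plus a correct interpolation lemma, not as a proof; the mathematical substance remains entirely in the two endpoints.
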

This conjecture holds true for MDS codes over $\mathbb{F}_q$ of length less than or equal to $q+1 $ and it will be proved in Section \ref{sectionMDS}. The other cases remain   an open problem. 
In our view, the concept of equivalent codes in Lemma \ref{rem1} might be useful in solving Conjecture~\ref{cojecture} as discussed  in Example \ref{exwp}.

% 
%  \begin{remark}   .
%      
%      \begin{enumerate}
%          \item   
% For a linear $\ell$-intersection pair of  $[n,k_1,d_1]_{q}$ and    $[n,k_2,d_2]_{q}$ codes 
%     
%     permutation matrices 
%     
%        \item   
%        
%        We note that for applications (see Section \ref{sectApp}) we may need to start the algorithm with ....
%      \end{enumerate}
%     
% \end{remark}

\section{Linear $\ell$-Intersection Pairs of  MDS Codes} \label{sectionMDS}
The class of MDS codes has been extensively studied for decades, and numerous approach have been developed to construct these codes.
Such codes are optimal in the sense that for a fixed length and dimension they can correct
and detect the maximum  number of errors.
Further, they have various applications in erasure channel and wireless communication.
In this section, constructions of linear  $\ell$-intersection pairs of MDS codes for all possible parameters over $\mathbb{F}_q$  are given up to length $q+1$. Precisely, Conjecture \ref{cojecture} is proved for MDS codes of    such lengths.

\subsection{Linear $\ell$-Intersection Pairs of  MDS Codes from GRS and Extended GRS Codes}
We focus here on constructions of linear $\ell$- intersection pairs of MDS codes.
Based on the notation given  in \cite{Jin2017}, all linear  $\ell$-intersection pairs of MDS codes   up to length $q+1$ over $\mathbb{F}_q$
can be derived in terms of generalized Reed-Solomon (GRS) codes and extended GRS codes.

First, we determine the intersection of two arbitrary GRS codes of length up to $q$.
For $n\leq q$, let
$\boldsymbol{a}=(a_1,a_2,\dots,a_n)$ and $\boldsymbol{v}=(v_1,v_2,\dots,v_n)$
be words in $\mathbb{F}_q^n$ such that $a_1,a_2,\dots,a_n$ are distinct and  $v_1,v_2,\dots,v_n$ are non-zero.
Let $P(x)$ be a non-zero polynomial in $\mathbb{F}_q[x]$ of degree less than or equal to $n$.
If $P(a_i)\ne 0$ for all $i=1,2,\dots,n$, the code (see \cite{Jin2017})
\begin{align}%\begin{split}
{\rm  GRS}(\boldsymbol{a},P(x), \boldsymbol{v})=&\left\{ \left(  \frac{v_1f(a_1)}{P(a_1)},   \frac{v_2f(a_2)}{P(a_2)}, \dots, \frac{v_nf(a_n)}{P(a_n)} \right) \, \Big\vert \,  f(x)\in \mathbb{F}_q[x] \right. \notag  \\
&\hskip10em \left. \text{and } \deg(f(x))< \deg(P(x))  \vphantom{\frac{v_1f(a_1)}{P(a_1)}}  \right\}
%\end{split}
\end{align}
is well defined and is a  GRS code  of length $n$ and dimension $\deg(P(x))$.
From \cite{Jin2017}, this code is  MDS with parameters $[n,\deg(P(x)), n- \deg(P(x))+1 ]_q$.

For polynomials $P(x)$ and $Q(x)$ over $\mathbb{F}_q$ of degree less than or equal to $n$, some properties of the codes
${\rm  GRS}(\boldsymbol{a},P(x), \boldsymbol{v})$ and ${\rm  GRS}(\boldsymbol{a},Q(x), \boldsymbol{v})$
can be determined in terms of $P(x)$ and $Q(x)$.

\begin{lemma}
    \label{PdivQ}
    Let $q$ be a prime power and $n\leq q$ be an integer.
    Further, let $P(x)$ and $Q(x)$ be polynomials of degree less than or equal to $n$  in $\mathbb{F}_q[x]$
    such that  $\gcd(P(x)Q(x), \prod\limits_{i=1}^n(x-a_i))=1$.
    If $P(x)|Q(x)$, then \[  {\rm  GRS}(\boldsymbol{a},P(x), \boldsymbol{v}) \subseteq  {\rm  GRS}(\boldsymbol{a},Q(x), \boldsymbol{v}).\]
\end{lemma}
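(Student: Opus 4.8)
The plan is to prove the containment directly: I would take an arbitrary codeword of ${\rm GRS}(\boldsymbol{a},P(x),\boldsymbol{v})$ and exhibit it as a codeword of ${\rm GRS}(\boldsymbol{a},Q(x),\boldsymbol{v})$. Since $P(x)\mid Q(x)$, first write $Q(x)=P(x)R(x)$ for some $R(x)\in\mathbb{F}_q[x]$ with $\deg(R(x))=\deg(Q(x))-\deg(P(x))$. The hypothesis $\gcd(P(x)Q(x),\prod\limits_{i=1}^n(x-a_i))=1$ guarantees $P(a_i)\ne 0$ and $Q(a_i)\ne 0$ for every $i$, so both codes are well defined in the sense of the displayed definition; moreover, from $Q(a_i)=P(a_i)R(a_i)\ne 0$ one gets $R(a_i)\ne 0$ for all $i$, a fact that is needed in the cancellation below.

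Next, a typical codeword of ${\rm GRS}(\boldsymbol{a},P(x),\boldsymbol{v})$ has the form $\left(\frac{v_1f(a_1)}{P(a_1)},\dots,\frac{v_nf(a_n)}{P(a_n)}\right)$ with $\deg(f(x))<\deg(P(x))$. I would set $g(x)=f(x)R(x)$ and check that it is admissible for the target code:
\[
\deg(g(x))=\deg(f(x))+\deg(R(x))<\deg(P(x))+\bigl(\deg(Q(x))-\deg(P(x))\bigr)=\deg(Q(x)).
\]
Evaluating at each $a_i$ and using $Q(a_i)=P(a_i)R(a_i)$ together with $R(a_i)\ne 0$ then yields
\[
\frac{v_ig(a_i)}{Q(a_i)}=\frac{v_if(a_i)R(a_i)}{P(a_i)R(a_i)}=\frac{v_if(a_i)}{P(a_i)}
\]
for every $i$, so the chosen codeword coincides with the codeword of ${\rm GRS}(\boldsymbol{a},Q(x),\boldsymbol{v})$ determined by $g(x)$. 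As the codeword was arbitrary, the containment ${\rm GRS}(\boldsymbol{a},P(x),\boldsymbol{v})\subseteq{\rm GRS}(\boldsymbol{a},Q(x),\boldsymbol{v})$ follows.

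I do not expect any real obstacle here; the argument is essentially the observation that multiplication of the message polynomial $f$ by the quotient $R=Q/P$ converts a $P$-evaluation into a $Q$-evaluation of the same vector. The only two points deserving care are the nonvanishing $R(a_i)\ne 0$, which is exactly what licenses the cancellation in the final display, and the degree bookkeeping showing that $g(x)$ respects the strict degree bound of the larger code. Both are immediate consequences of the coprimality hypothesis and the factorization $Q(x)=P(x)R(x)$, so the proof is short and computational rather than conceptual.
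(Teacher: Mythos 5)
Your proof is correct and follows essentially the same route as the paper: the paper's argument multiplies the message polynomial $f(x)$ by $\frac{Q(x)}{P(x)}$ (your $R(x)$), checks the degree bound, and performs the same cancellation at each evaluation point. The only cosmetic difference is that you name the quotient $R(x)$ explicitly and isolate the nonvanishing $R(a_i)\ne 0$, which the paper leaves implicit in the coprimality hypothesis.
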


\begin{proof}  Note that  the GRS codes  ${\rm  GRS}(\boldsymbol{a},P(x), \boldsymbol{v}) $
    and $ {\rm  GRS}(\boldsymbol{a},Q(x), \boldsymbol{v})$ are well defined since $\gcd(P(x)Q(x), \prod\limits_{i=1}^n(x-a_i))=1$.
    Assume that $P(x)|Q(x))$ and let
    \[
    \boldsymbol{c}= \left(  \frac{v_1f(a_1)}{P(a_1)},   \frac{v_2f(a_2)}{ P(a_2)}, \dots, \frac{v_nf(a_n)}{P(a_n)} \right) \in   {\rm  GRS}(\boldsymbol{a},P(x), \boldsymbol{v}).
    \]
    Then $\deg(f(x))<\deg(P(x))$  which implies that $\frac{f(x)Q(x) }{ P(x)}$ is a polynomial over $\mathbb{F}_q$ and  $\deg(\frac{f(x)Q(x) }{ P(x)})<\deg(Q(x))$.  It follows that  \begin{align*}\boldsymbol{c}&= \left(  \frac{v_1f(a_1)}{P(a_1)},   \frac{v_2f(a_2)}{P(a_2)}, \dots, \frac{v_nf(a_n)}{P(a_n)} \right)\\
    & = \left(  \frac{v_1\frac{f(a_1)Q(a_1)}{P(a_1)}}{Q(a_1)},   \frac{v_2 \frac{f(a_2)Q(a_2)}{P(a_2)}}{Q(a_2)}, \dots, \frac{v_n \frac{f(a_n)Q(a_n)}{P(a_n)}}{Q(a_n)} \right)
    \in   {\rm  GRS}(\boldsymbol{a},P(x), \boldsymbol{v}),
    \end{align*}
    and hence $ {\rm  GRS}(\boldsymbol{a},P(x), \boldsymbol{v}) \subseteq  {\rm  GRS}(\boldsymbol{a},Q(x), \boldsymbol{v})$ as required.
\end{proof}

The intersection of pairs of GRS codes is determined in the next theorem.

\begin{theorem}
    \label{PQ-intersection}
    Let $q$ be a prime power and $n\leq q$ be a positive  integer.
    Further, let $P(x)$, $Q(x)$ and $L(x)$ be polynomials of degree less than or equal to $n$  in $\mathbb{F}_q[x]$ satisfying  the following conditions.
    \begin{enumerate}
        \item  $\gcd(P(x),Q(x))=L(x)$.
        \item $\gcd(P(x)Q(x), \prod\limits_{i=1}^n(x-a_i))=1$.
        \item $\deg(P(x))+\deg(Q(x)\leq n+\deg(L(x))$.
    \end{enumerate}
    Then $  {\rm  GRS}(\boldsymbol{a},P(x), \boldsymbol{v})\cap  {\rm  GRS}(\boldsymbol{a},Q(x), \boldsymbol{v}) =   {\rm  GRS}(\boldsymbol{a},L(x), \boldsymbol{v})$.
\end{theorem}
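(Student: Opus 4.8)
The plan is to prove the claimed equality by establishing the two inclusions separately. For the inclusion $\mathrm{GRS}(\boldsymbol{a},L(x),\boldsymbol{v}) \subseteq \mathrm{GRS}(\boldsymbol{a},P(x),\boldsymbol{v}) \cap \mathrm{GRS}(\boldsymbol{a},Q(x),\boldsymbol{v})$, I would simply invoke Lemma~\ref{PdivQ} twice. Since $L(x)=\gcd(P(x),Q(x))$ divides both $P(x)$ and $Q(x)$, and since condition~2 together with $L(x)\mid P(x)$ forces $\gcd(L(x)P(x),\prod_{i=1}^{n}(x-a_i))=1$ and $\gcd(L(x)Q(x),\prod_{i=1}^{n}(x-a_i))=1$, the lemma gives $\mathrm{GRS}(\boldsymbol{a},L(x),\boldsymbol{v})\subseteq \mathrm{GRS}(\boldsymbol{a},P(x),\boldsymbol{v})$ and $\mathrm{GRS}(\boldsymbol{a},L(x),\boldsymbol{v})\subseteq \mathrm{GRS}(\boldsymbol{a},Q(x),\boldsymbol{v})$, hence the inclusion into the intersection.

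For the reverse inclusion, I would take an arbitrary $\boldsymbol{c}$ in the intersection. Writing it in both forms, there are $f(x),g(x)\in\mathbb{F}_q[x]$ with $\deg f<\deg P$ and $\deg g<\deg Q$ such that $v_if(a_i)/P(a_i)=v_ig(a_i)/Q(a_i)$ for every $i$. Since each $v_i$ is nonzero and $P(a_i),Q(a_i)\neq 0$, this yields $f(a_i)Q(a_i)=g(a_i)P(a_i)$ for all $i$, so the evaluation polynomial $E(x):=\prod_{i=1}^{n}(x-a_i)$ divides $R(x):=f(x)Q(x)-g(x)P(x)$.

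The crux is to show $R(x)=0$; this is where condition~3 enters and is the step I expect to be the main obstacle. Since $E(x)\mid R(x)$, write $R(x)=E(x)S(x)$. Now factor $P(x)=L(x)P'(x)$ and $Q(x)=L(x)Q'(x)$ with $\gcd(P'(x),Q'(x))=1$; then $L(x)\bigl(f(x)Q'(x)-g(x)P'(x)\bigr)=E(x)S(x)$, and since $\gcd(L(x),E(x))=1$ by condition~2 we get $L(x)\mid S(x)$. On the other hand, $\deg R\le \deg P+\deg Q-1\le n+\deg L-1$ by condition~3, so if $S(x)\neq 0$ then $\deg S=\deg R-n\le \deg L-1<\deg L$, contradicting $L(x)\mid S(x)$. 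Hence $S(x)=0$, that is, $f(x)Q(x)=g(x)P(x)$. Note that condition~3 is essential precisely here: without the degree bound, $S$ could be a nonzero multiple of $L$ and the intersection could be strictly larger than $\mathrm{GRS}(\boldsymbol{a},L(x),\boldsymbol{v})$.

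Finally, from $f(x)Q(x)=g(x)P(x)$ and $\gcd(P'(x),Q'(x))=1$ I would deduce $P'(x)\mid f(x)$, say $f(x)=P'(x)h(x)$; cancelling then gives $g(x)=Q'(x)h(x)$, and the bound $\deg f<\deg P=\deg L+\deg P'$ yields $\deg h<\deg L$. Substituting back, $v_if(a_i)/P(a_i)=v_ih(a_i)/L(a_i)$ for every $i$ (using $P'(a_i),L(a_i)\neq 0$), so $\boldsymbol{c}\in\mathrm{GRS}(\boldsymbol{a},L(x),\boldsymbol{v})$. This establishes the reverse inclusion, and combined with the first paragraph completes the proof.
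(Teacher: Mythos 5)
Your proof is correct and follows essentially the same route as the paper's: the forward inclusion via Lemma~\ref{PdivQ}, and the reverse inclusion by deducing $f(a_i)Q(a_i)=g(a_i)P(a_i)$, forcing $f(x)Q(x)-g(x)P(x)=0$ from the vanishing at the $n$ points together with the degree bound of condition~3, and then dividing out $P(x)/L(x)$ from $f(x)$. The only cosmetic difference is in the middle step: the paper divides by $L(x)$ first and argues that $f(x)\frac{Q(x)}{L(x)}-g(x)\frac{P(x)}{L(x)}$ has degree less than $n$ yet has $n$ distinct roots, whereas you factor out $\prod_{i=1}^{n}(x-a_i)$ explicitly and combine $\gcd(L(x),\prod_i(x-a_i))=1$ with the same degree count; the two arguments are interchangeable.
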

\begin{proof} Since $\gcd(P(x)Q(x), \prod\limits_{i=1}^n(x-a_i))=1$,   the three codes   are  well defined.  Note that $L(x)|P(x)$, $L(x)|Q(x)$ and   $\gcd(\frac{P(x)}{L(x)},\frac{Q(x)}{L(x)})=1$.
    Let   \begin{align*}
    \boldsymbol{c}&= \left(  \frac{v_1f(a_1)}{P(a_1)},   \frac{v_2f(a_2)}{P(a_2)}, \dots, \frac{v_nf(a_n)}{P(a_n)} \right)  \\
    &=  \left(  \frac{v_1g(a_1)}{Q(a_1)},   \frac{v_2g(a_2)}{Q(a_2)}, \dots, \frac{v_ng(a_n)}{Q(a_n)} \right)  \in  {\rm  GRS}(\boldsymbol{a},P(x), \boldsymbol{v})\cap  {\rm  GRS}(\boldsymbol{a},Q(x), \boldsymbol{v})
    \end{align*}
    for some $f(x)$ and $g(x)$ in $\mathbb{F}_q[x]$ such that $\deg(f(x))< \deg(P(x))$  and  $\deg(f(x))< \deg(Q(x))$.
    Then $ {f(a_i)}{Q(a_i)}= {g(a_i)}{P(a_i)}$ for all $i=1,2,\dots,n$  which implies that  $ {f(a_i)}\frac{Q(a_i)}{L(a_i)}- {g(a_i)}\frac{P(a_i)}{L(a_i)}=0$
    for all $i=1,2,\dots,n$.
    Since $\deg(f(x))< \deg(P(x))$  and  $\deg(f(x))< \deg(Q(x))$, we have $\deg(f(x) \frac{Q(x)}{L(x)})< \deg(P(x))+\deg(Q(x))-\deg(L(x)) \leq n$  and   $\deg(g(x) \frac{P(x)}{L(x)})< \deg(Q(x))+\deg(P(x))-\deg(L(x)) \leq n$.  It follows that   $ {f(x)}\frac{Q(x)}{L(x)}- {g(x)}\frac{P(x)}{L(x)}=0$ is the zero polynomial.  Since $\gcd(\frac{P(x)}{L(x)},\frac{Q(x)}{L(x)})=1$, we have    $\frac{Q(x)}{L(x)}|g(x)  $ and $ \frac{P(x)}{L(x)}|f(x)$.    Then $\frac{f(x)L(x)}{P(x)}$ is a polynomial in $\mathbb{F}_q[x]$ such that   $\deg (\frac{f(x)L(x)}{P(x)})< \deg(L(x))$ and
    \[ \frac{v_if(a_i)}{P(a_i)}=  \frac{v_if(a_i)L(a_i)/ P(a_i)}{L(a_i)} \]
    for all $i=1,2,\dots,n$.
    Hence,
    \begin{align*}
    \boldsymbol{c}&= \left(   \frac{v_1f(a_1)L(a_1)/ P(a_1)}{L(a_1)},     \frac{v_2f(a_2)L(a_2)/ P(a_2)}{L(a_2)} , \dots,  \frac{v_nf(a_n)L(a_n)/ P(a_n)}{L(a_n)} \right)\\
    &  \in {\rm  GRS}(\boldsymbol{a},L(x), \boldsymbol{v}),
    \end{align*}
    and therefore,
    ${\rm  GRS}(\boldsymbol{a},P(x), \boldsymbol{v})\cap  {\rm  GRS}(\boldsymbol{a},Q(x), \boldsymbol{v})  \subseteq  {\rm  GRS}(\boldsymbol{a},L(x), \boldsymbol{v}) $.
    
    Since  $L(x)|P(x)$ and $L(x)|Q(x)$,  we have   $  {\rm  GRS}(\boldsymbol{a},L(x), \boldsymbol{v}) \subseteq   {\rm  GRS}(\boldsymbol{a},P(x), \boldsymbol{v})$ and  $  {\rm  GRS}(\boldsymbol{a},L(x), \boldsymbol{v}) \subseteq   {\rm  GRS}(\boldsymbol{a},Q(x), \boldsymbol{v})$  by Lemma \ref{PdivQ}.
    Hence, $  {\rm  GRS}(\boldsymbol{a},L(x), \boldsymbol{v}) \subseteq    {\rm  GRS}(\boldsymbol{a},P(x), \boldsymbol{v})\cap  {\rm  GRS}(\boldsymbol{a},Q(x), \boldsymbol{v}) $.
    Therefore,
    ${\rm  GRS}(\boldsymbol{a},P(x), \boldsymbol{v})\cap  {\rm  GRS}(\boldsymbol{a},Q(x), \boldsymbol{v}) =   {\rm  GRS}(\boldsymbol{a},L(x), \boldsymbol{v})$
    as required.
\end{proof}

The following corollaries can be derived from Theorem \ref{PQ-intersection}.

\begin{corollary}
    \label{lcdPQ}
    Let $q$ be a prime power and $n\leq q$ be a positive integer.
    Further, let $P(x)$, $Q(x)$ and $M(x)$ be polynomials of degree less than or equal to $n$  in $\mathbb{F}_q[x]$ satisfying  the following conditions.
    \begin{enumerate}
        \item  ${\rm lcm}(P(x),Q(x))=M(x)$.
        \item $\gcd(P(x)Q(x), \prod\limits_{i=1}^n(x-a_i))=1$.
        \item $ \deg(M(x))\leq  n$.
    \end{enumerate}
    Then
    $  {\rm  GRS}(\boldsymbol{a},P(x), \boldsymbol{v})+  {\rm  GRS}(\boldsymbol{a},Q(x), \boldsymbol{v}) =   {\rm  GRS}(\boldsymbol{a},M(x), \boldsymbol{v})$.
\end{corollary}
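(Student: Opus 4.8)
The plan is to express the sum of the two GRS codes through their intersection by means of the dimension formula, compute that dimension explicitly with the help of Theorem \ref{PQ-intersection}, and then identify the sum with ${\rm GRS}(\boldsymbol{a},M(x),\boldsymbol{v})$ by a containment together with an equality of dimensions.

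First I would set $L(x)=\gcd(P(x),Q(x))$ and invoke the classical degree identity $\deg(M(x))+\deg(L(x))=\deg(P(x))+\deg(Q(x))$, which holds because ${\rm lcm}(P(x),Q(x))\cdot\gcd(P(x),Q(x))$ agrees with $P(x)Q(x)$ up to a unit. Combining this with hypothesis~3, namely $\deg(M(x))\le n$, yields $\deg(P(x))+\deg(Q(x))=\deg(M(x))+\deg(L(x))\le n+\deg(L(x))$, which is precisely the third hypothesis of Theorem \ref{PQ-intersection}. Since hypothesis~2 of the corollary supplies the coprimality condition $\gcd(P(x)Q(x),\prod_{i=1}^n(x-a_i))=1$ and $L(x)=\gcd(P(x),Q(x))$ by construction, all hypotheses of Theorem \ref{PQ-intersection} are met, and it gives ${\rm GRS}(\boldsymbol{a},P(x),\boldsymbol{v})\cap{\rm GRS}(\boldsymbol{a},Q(x),\boldsymbol{v})={\rm GRS}(\boldsymbol{a},L(x),\boldsymbol{v})$, a code of dimension $\deg(L(x))$.

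Next I would apply the dimension formula $\dim(U+W)=\dim(U)+\dim(W)-\dim(U\cap W)$ to the two GRS codes. Each GRS code here has dimension equal to the degree of its defining polynomial, so this gives $\dim\bigl({\rm GRS}(\boldsymbol{a},P(x),\boldsymbol{v})+{\rm GRS}(\boldsymbol{a},Q(x),\boldsymbol{v})\bigr)=\deg(P(x))+\deg(Q(x))-\deg(L(x))=\deg(M(x))$. To finish, I would produce the inclusion and close by this dimension count. Because $M(x)={\rm lcm}(P(x),Q(x))$, we have $P(x)\mid M(x)$ and $Q(x)\mid M(x)$; moreover $M(x)\mid P(x)Q(x)$, so hypothesis~2 forces $\gcd(P(x)M(x),\prod_{i=1}^n(x-a_i))=1$ and likewise for $Q(x)M(x)$. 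Hence Lemma \ref{PdivQ} applies and yields ${\rm GRS}(\boldsymbol{a},P(x),\boldsymbol{v})\subseteq{\rm GRS}(\boldsymbol{a},M(x),\boldsymbol{v})$ and ${\rm GRS}(\boldsymbol{a},Q(x),\boldsymbol{v})\subseteq{\rm GRS}(\boldsymbol{a},M(x),\boldsymbol{v})$, so their sum is contained in ${\rm GRS}(\boldsymbol{a},M(x),\boldsymbol{v})$, which has dimension $\deg(M(x))$. An inclusion of subspaces of equal finite dimension is an equality, which gives the claim.

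The step requiring the most care is the bookkeeping of the coprimality hypotheses: I must check that $\gcd(P(x)Q(x),\prod_{i=1}^n(x-a_i))=1$ propagates to the products $P(x)M(x)$, $Q(x)M(x)$ appearing in Lemma \ref{PdivQ}, which it does via the divisibilities $M(x)\mid P(x)Q(x)$ and $L(x)\mid P(x)$, $L(x)\mid Q(x)$. The conceptual crux is the lcm--gcd degree identity, since it is exactly what converts the hypothesis $\deg(M(x))\le n$ into the degree hypothesis of Theorem \ref{PQ-intersection}; once that is in place, the remainder is the standard sum--intersection dimension argument.
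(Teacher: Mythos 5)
Your proposal is correct and follows essentially the same route as the paper: containment of the sum in ${\rm GRS}(\boldsymbol{a},M(x),\boldsymbol{v})$ via Lemma \ref{PdivQ}, plus a dimension count for the sum via Theorem \ref{PQ-intersection} and the $\gcd$--${\rm lcm}$ degree identity. Your only additions are explicit verifications (that $\deg(M)\le n$ yields hypothesis~3 of Theorem \ref{PQ-intersection}, and that the coprimality condition propagates to $P(x)M(x)$ and $Q(x)M(x)$) which the paper leaves implicit.
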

\begin{proof}
    Since $P(x)|M(x)$ and $Q(x)|M(x)$, we have
    $  {\rm  GRS}(\boldsymbol{a},P(x), \boldsymbol{v}) \subseteq    {\rm  GRS}(\boldsymbol{a},M(x), \boldsymbol{v})$ and
    $  {\rm  GRS}(\boldsymbol{a},Q(x), \boldsymbol{v}) \subseteq    {\rm  GRS}(\boldsymbol{a},M(x), \boldsymbol{v})$ by Lemma \ref{PdivQ},
    and hence
    \[
    {\rm  GRS}(\boldsymbol{a},P(x), \boldsymbol{v})+  {\rm  GRS}(\boldsymbol{a},Q(x), \boldsymbol{v})  \subseteq  {\rm  GRS}(\boldsymbol{a},M(x), \boldsymbol{v}).
    \]
    From Theorem \ref{PQ-intersection}, we have  \begin{align*}
    \dim (  {\rm  GRS}(\boldsymbol{a},P(x), \boldsymbol{v})&+  {\rm  GRS}(\boldsymbol{a},Q(x), \boldsymbol{v})  )\\
    &= \dim( {\rm  GRS}(\boldsymbol{a},P(x), \boldsymbol{v})) + \dim( {\rm  GRS}(\boldsymbol{a},Q(x), \boldsymbol{v}) )\\
    &\quad -\dim({\rm  GRS}(\boldsymbol{a},\gcd(P(x),Q(x)), \boldsymbol{v})  )
    \\
    &=\deg(P(x))+\deg(Q(x)) -\deg(\gcd(P(x),Q(x)))\\
    &= \deg(M(x))\\
    &= \dim({\rm  GRS}(\boldsymbol{a},M(x), \boldsymbol{v})),
    \end{align*}
    so ${\rm  GRS}(\boldsymbol{a},P(x), \boldsymbol{v})+  {\rm  GRS}(\boldsymbol{a},Q(x), \boldsymbol{v}) =   {\rm  GRS}(\boldsymbol{a},M(x), \boldsymbol{v})$ as required.
\end{proof}

The existence of linear $\ell$-intersection pairs of GRS (MDS) codes with prescribed parameters depends on the existence of polynomials $P(x)$, $Q(x)$ and $L(x)$ which will be discussed in Theorem \ref{prop-MDS-pair}.

The algebraic structure of the intersection of arbitrary pairs of  GRS codes of length up to $q$ was discussed above.
Next, we focus on the intersection of pairs of MDS codes of length $n\leq q+1$ using extended GRS codes.
For a positive integer  $n\leq q+1$,
let $\boldsymbol{a}=(a_1,a_2,\dots,a_{n-1}) \in \mathbb{F}_q^{n-1}$ and $\boldsymbol{v}=(v_1,v_2,\dots,v_{n-1})\in \mathbb{F}_q^{n}$
be  such that $a_1,a_2,\dots,a_{n-1}$ are distinct and  $v_1,v_2,\dots,v_{n-1}$ are non-zero.
Let $P(x)$ be a non-zero polynomial in $\mathbb{F}_q[x]$ of degree  $k\leq n$ such that $P(a_i)\ne 0$ for all $i=1,2,\dots,{n-1}$.

For polynomials $a(x)=a_0+a_1x+\dots+ a_tx^t$ and $b(x)=b_0+b_1x+\dots+b_tx^t$ in $\mathbb{F}_q[x]$ with $b_t\ne0$,
let  $r(x)=\frac{a(x)}{b(x)}$ be a rational function.
The evaluation $r(\infty) $ is defined to be $\frac{a_t}{b_t}$.
Thus, $r(\infty) =0$ if and only if $\deg(a(x))<\deg(b(x))$.
For a polynomial  $f(x)\in \mathbb{F}_q[x]$ with $\deg(f(x))<\deg(P(x))$,
we then have  $  \left(\frac{xf}{P}\right)(\infty)=0 $ if and only if $\deg(f(x))\leq \deg(P(x))-2$,
and  $\left( \frac{xf}{P}\right)(\infty)\ne 0 $ if and only if $\deg(f(x))= \deg(P(x))-1$.

In \cite{Jin2017}, the extended GRS code  of length $n$ and dimension $\deg(P(x))$ over $\mathbb{F}_q$  is defined to be
\begin{align}
{\rm  GRS}_\infty&(\boldsymbol{a},P(x), \boldsymbol{v})
\notag\\&=\left\{ \left(  \frac{v_1f(a_1)}{P(a_1)},   \frac{v_2f(a_2)}{P(a_2)}, \dots, \frac{v_{n-1}f(a_{n-1})}{P(a_{n-1})},  v_{n} \left( \frac{xf}{P}\right)(\infty) \right) \, \Big\vert \, f(x)\in \mathbb{F}_q[x]  \right. \notag \\
& \hskip10em \left.\text{and } \deg(f(x))< \deg(P(x))  \vphantom{\frac{v_1f(a_1)}{P(a_1)}}  \right\}.
\end{align}
From \cite{Jin2017}, this extended GRS code is  MDS with parameters $[n,\deg(P(x)), n- \deg(P(x))+1 ]_q$.

The properties of extended GRS codes are now given in terms of the corresponding polynomial $P(x)$.
\begin{lemma}
    \label{EPdivQ}
    Let $q$ be a prime power and $n\leq q+1$ be a positive integer.
    Further, let $P(x)$ and $Q(x)$ be polynomials of degree less than or equal to $n$  in $\mathbb{F}_q[x]$
    such that  $\gcd(P(x)Q(x), \prod\limits_{i=1}^n(x-a_i))=1$.
    If $P(x)|Q(x)$, then \[  {\rm  GRS}_\infty(\boldsymbol{a},P(x), \boldsymbol{v}) \subseteq  {\rm  GRS}_\infty(\boldsymbol{a},Q(x), \boldsymbol{v}).\]
\end{lemma}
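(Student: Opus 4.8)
The plan is to mirror the proof of Lemma \ref{PdivQ}, handling the extra (infinity) coordinate separately. First I would take an arbitrary codeword
\[
\boldsymbol{c} = \left( \frac{v_1 f(a_1)}{P(a_1)}, \dots, \frac{v_{n-1}f(a_{n-1})}{P(a_{n-1})}, v_n\left(\frac{xf}{P}\right)(\infty) \right) \in {\rm  GRS}_\infty(\boldsymbol{a},P(x),\boldsymbol{v}),
\]
arising from some $f(x) \in \mathbb{F}_q[x]$ with $\deg(f(x)) < \deg(P(x))$. Since $P(x) \mid Q(x)$, the quotient $Q(x)/P(x)$ is a polynomial, so $g(x) := f(x)Q(x)/P(x)$ is a polynomial over $\mathbb{F}_q$ with $\deg(g(x)) = \deg(f(x)) + \deg(Q(x)) - \deg(P(x)) < \deg(Q(x))$; thus $g(x)$ is an admissible evaluation polynomial for ${\rm  GRS}_\infty(\boldsymbol{a},Q(x),\boldsymbol{v})$.

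Next I would check that the codewords produced by $f$ (over $P$) and by $g$ (over $Q$) coincide. For the first $n-1$ coordinates this is exactly the computation in Lemma \ref{PdivQ}: for each $i$ one has $g(a_i)/Q(a_i) = f(a_i)\bigl(Q(a_i)/P(a_i)\bigr)/Q(a_i) = f(a_i)/P(a_i)$, where the hypothesis $\gcd(P(x)Q(x), \prod_{i=1}^n(x-a_i))=1$ guarantees that all denominators are nonzero. The step requiring genuine care is the last coordinate: I must show $(xg/Q)(\infty) = (xf/P)(\infty)$. Because $g = fQ/P$, the two rational functions satisfy $xg/Q = xfQ/(PQ) = xf/P$, so they represent the same element of $\mathbb{F}_q(x)$ and hence share the same value at $\infty$.

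The main obstacle is reconciling this equality of rational functions with the definition of evaluation at $\infty$ given above, which is phrased in terms of leading coefficients of representatives written to a common formal degree; to be safe I would verify the equality directly through degree bookkeeping. From $\deg(f) \le \deg(P)-1$ one gets $\deg(xf) \le \deg(P)$ and $\deg(xg) = 1 + \deg(f) + \deg(Q) - \deg(P) \le \deg(Q)$, so both evaluations are well defined. In the boundary case $\deg(f) = \deg(P)-1$, both $xf/P$ and $xg/Q$ have numerator degree equal to denominator degree, and a short leading-coefficient computation (using that the leading coefficient of $g$ equals $\mathrm{lc}(f)\,\mathrm{lc}(Q)/\mathrm{lc}(P)$) shows both evaluate to $\mathrm{lc}(f)/\mathrm{lc}(P)$; in the remaining case $\deg(f) \le \deg(P)-2$, both numerators have strictly smaller degree than their denominators, so both evaluations are $0$. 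In either case the last coordinate matches, so $\boldsymbol{c} \in {\rm  GRS}_\infty(\boldsymbol{a},Q(x),\boldsymbol{v})$, which yields the desired inclusion ${\rm  GRS}_\infty(\boldsymbol{a},P(x),\boldsymbol{v}) \subseteq {\rm  GRS}_\infty(\boldsymbol{a},Q(x),\boldsymbol{v})$.
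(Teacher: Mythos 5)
Your proposal is correct and follows essentially the same route as the paper: take $g = fQ/P$, note it is a polynomial of degree less than $\deg(Q)$, match the first $n-1$ coordinates as in Lemma \ref{PdivQ}, and handle the coordinate at $\infty$ by comparing $\deg(xfQ/P)-\deg(Q)$ with $\deg(xf)-\deg(P)$. Your explicit two-case leading-coefficient check at $\infty$ is just a more detailed version of the paper's one-line degree observation, so there is nothing further to reconcile.
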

\begin{proof}
    Assume that $P(x)|Q(x))$.
    Let \[\boldsymbol{c}= \left(  \frac{v_1f(a_1)}{P(a_1)},   \frac{v_2f(a_2)}{ P(a_2)}, \dots, \frac{v_{n-1}f(a_{n-1})}{P(a_{n-1})},  v_{n} \left( \frac{xf}{P}\right)(\infty) \right) \in   {\rm  GRS}_\infty(\boldsymbol{a},P(x), \boldsymbol{v}).\]
    Using arguments similar to those in the proof of Lemma \ref{PdivQ},  we have that $\frac{f(x)Q(x) }{ P(x)}$ is a polynomial in $\mathbb{F}_q$ and  $\deg(\frac{f(x)Q(x) }{ P(x)})<\deg(Q(x))$. Moreover,  $\frac{v_if(a_i)}{P(a_i)}= \frac{v_i\frac{f(a_i)Q(a_i)}{P(a_i)}}{Q(a_i)}$ for all $i=1,2,\dots, n-1$.     Since $\deg( \frac{xf(x)Q(x)}{P(x)}) -\deg(Q(x))= \deg(xf(x))-\deg(P(x))$, we have
    $\left( \frac{xf}{P}\right)(\infty)=  \left( \frac{\frac{xfQ}{P}}{Q}\right)(\infty)$.
    It follows that
    $\boldsymbol{c}
    \in   {\rm  GRS}_\infty(\boldsymbol{a},P(x), \boldsymbol{v})$,
    and hence ${\rm  GRS}_\infty (\boldsymbol{a},P(x), \boldsymbol{v}) \subseteq  {\rm  GRS}_\infty (\boldsymbol{a},Q(x), \boldsymbol{v})$ as required.
\end{proof}

\begin{theorem}
    \label{EPQ-intersection}
    Let $q$ be a prime power and $n\leq q+1$ be a positive integer.
    Further, let $P(x)$, $Q(x)$ and $L(x)$ be polynomials of degree less than or equal to $n$  in $\mathbb{F}_q[x]$ satisfying  the following conditions.
    \begin{enumerate}
        \item  $\gcd(P(x),Q(x))=L(x)$.
        \item $\gcd(P(x)Q(x), \prod\limits_{i=1}^n(x-a_i))=1$.
        \item $\deg(P(x))+\deg(Q(x))\leq n+\deg(L(x))$.
    \end{enumerate}
    Then ${\rm  GRS}_\infty(\boldsymbol{a},P(x), \boldsymbol{v})\cap  {\rm  GRS}_\infty(\boldsymbol{a},Q(x), \boldsymbol{v}) =   {\rm  GRS}_\infty(\boldsymbol{a},L(x), \boldsymbol{v})$.
\end{theorem}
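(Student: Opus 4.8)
The plan is to follow the template of the proof of Theorem \ref{PQ-intersection}, the one genuinely new ingredient being the evaluation-at-infinity coordinate, which must compensate for the fact that there are now only $n-1$ finite evaluation points instead of $n$.

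For the inclusion ${\rm GRS}_\infty(\boldsymbol{a},L(x),\boldsymbol{v}) \subseteq {\rm GRS}_\infty(\boldsymbol{a},P(x),\boldsymbol{v}) \cap {\rm GRS}_\infty(\boldsymbol{a},Q(x),\boldsymbol{v})$, I would invoke Lemma \ref{EPdivQ} directly: condition (1) gives $L(x)\mid P(x)$ and $L(x)\mid Q(x)$, so each containment is immediate. The substance is the reverse inclusion. Take $\boldsymbol{c}$ in the intersection, represented through $P$ by some $f$ with $\deg f < \deg P$ and through $Q$ by some $g$ with $\deg g < \deg Q$. Equating the $n-1$ finite coordinates gives $f(a_i)Q(a_i) = g(a_i)P(a_i)$ for each $i$, so the polynomial $h(x) := f(x)\frac{Q(x)}{L(x)} - g(x)\frac{P(x)}{L(x)}$ --- a genuine polynomial since $L\mid P$ and $L\mid Q$ --- vanishes at $a_1,\dots,a_{n-1}$. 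Condition (3) together with $\deg f \leq \deg P-1$ and $\deg g \leq \deg Q - 1$ bounds its degree by $\deg P + \deg Q - \deg L - 1 \leq n-1$, which is one short of what is needed to force $h=0$ from the finite roots alone.

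This gap is closed by the infinity coordinate, and this is the step I expect to be the crux. Equating last coordinates gives $\left(\frac{xf}{P}\right)(\infty) = \left(\frac{xg}{Q}\right)(\infty)$. Using the description of evaluation at infinity recorded before the definition of ${\rm GRS}_\infty$ --- namely that $\left(\frac{xf}{P}\right)(\infty)$ equals $\mathrm{lc}(f)/\mathrm{lc}(P)$ when $\deg f = \deg P - 1$ and is $0$ otherwise, where $\mathrm{lc}$ denotes the leading coefficient --- a short computation shows that the coefficient of $x^{n-1}$ in $h$ equals $\left(\left(\frac{xf}{P}\right)(\infty) - \left(\frac{xg}{Q}\right)(\infty)\right)\frac{\mathrm{lc}(P)\,\mathrm{lc}(Q)}{\mathrm{lc}(L)}$. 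The infinity condition forces this coefficient to vanish, so in fact $\deg h \leq n-2$; having $n-1$ distinct roots, $h$ must be the zero polynomial. (When the inequality in condition (3) is strict the degree bound already gives $\deg h\leq n-2$ without the infinity coordinate, so the infinity coordinate is essential precisely in the extremal case $\deg P + \deg Q = n + \deg L$.)

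From $h = 0$ the argument rejoins that of Theorem \ref{PQ-intersection}: since $\gcd(P/L, Q/L) = 1$ one obtains $\frac{P}{L}\mid f$, so $s := \frac{fL}{P}$ is a polynomial with $\deg s < \deg L$. This $s$ exhibits $\boldsymbol{c}$ as an element of ${\rm GRS}_\infty(\boldsymbol{a},L(x),\boldsymbol{v})$: the finite coordinates match exactly as in the GRS case, and because $\frac{xf}{P} = \frac{xs}{L}$ as rational functions their values at infinity coincide, so the last coordinate matches as well. Hence the intersection is contained in ${\rm GRS}_\infty(\boldsymbol{a},L(x),\boldsymbol{v})$, and combined with the first inclusion the two sides coincide.
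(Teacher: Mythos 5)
Your proposal is correct and follows essentially the same route as the paper's proof: reverse inclusion via Lemma \ref{EPdivQ}, and for the forward inclusion the polynomial $f\frac{Q}{L}-g\frac{P}{L}$ vanishing at the $n-1$ finite points, with the infinity coordinate supplying the missing degree drop. Your leading-coefficient computation of $h$ at $x^{n-1}$ is just a restatement of the paper's step that $\left(\frac{xf}{P}\right)(\infty)=\left(\frac{xg}{Q}\right)(\infty)$ forces $\deg\bigl(f(x)Q(x)-g(x)P(x)\bigr)\leq \deg\bigl(P(x)Q(x)\bigr)-2$, so the two arguments coincide.
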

\begin{proof} First, we note that   $L(x)|P(x)$, $L(x)|Q(x)$ and   $\gcd(\frac{P(x)}{L(x)},\frac{Q(x)}{L(x)})=1$.    Since $\gcd(P(x)Q(x), \prod\limits_{i=1}^n(x-a_i))=1$,  we have $P(a_i)\ne 0$, $Q(a_i)\ne 0$, $L(a_i)\ne 0$ and the three codes   are  well defined.
    Let
    \begin{align*}
    \boldsymbol{c}&= \left(  \frac{v_1f(a_1)}{P(a_1)},   \frac{v_2f(a_2)}{P(a_2)}, \dots, \frac{v_{n-1}f(a_{n-1})}{P(a_{n-1})}, v_{n} \left( \frac{xf}{P}\right)(\infty)  \right)  \\
    &=  \left(  \frac{v_1g(a_1)}{Q(a_1)},   \frac{v_2g(a_2)}{Q(a_2)}, \dots, \frac{v_{n-1}g(a_{n-1})}{Q(a_{n-1})} , v_{n} \left( \frac{xg}{Q}\right)(\infty) \right)  \\
    &\in  {\rm  GRS}_\infty(\boldsymbol{a},P(x), \boldsymbol{v})\cap  {\rm  GRS}_\infty(\boldsymbol{a},Q(x), \boldsymbol{v})
    \end{align*}
    for some  $f(x)$ and $g(x)$ in $\mathbb{F}_q[x]$ such that $\deg(f(x))< \deg(P(x))$  and  $\deg(f(x))< \deg(Q(x))$.
    Then
    \begin{align}
    \label{eqn-1} {f(a_i)}{Q(a_i)}= {g(a_i)}{P(a_i)},
    \end{align}
    for all $i=1,2,\dots,n-1$   and  $ \left( \frac{xf}{P}\right)(\infty)= \left( \frac{xg}{Q}\right)(\infty)$.
    
    The latter condition $ \left( \frac{xf}{P}\right)(\infty)= \left( \frac{xg}{Q}\right)(\infty)$ implies that
    $\deg( f(x)Q(x)-g(x)P(x))\leq \deg(P(x)Q(x) )-2=  n+\deg(L(x))-2$.
    Hence, we have  $\deg( f(x)\frac{Q(x)}{L(x)}-g(x)\frac{P(x)}{L(x)})\leq   n -2$  so that  $ \left( \frac{xf}{P}\right)(\infty)= \left( \frac{xfL/P}{L}\right)(\infty)$.
    From \eqref{eqn-1},  we have   $ {f(a_i)}\frac{Q(a_i)}{L(a_i)}- {g(a_i)}\frac{P(a_i)}{L(a_i)}=0$
    for all $i=1,2,\dots,n-1$.
    It follows that   $ {f(x)}\frac{Q(x)}{L(x)}- {g(x)}\frac{P(x)}{L(x)}=0$ is the zero polynomial.
    Since $\gcd(\frac{P(x)}{L(x)},\frac{Q(x)}{L(x)})=1$, we have $\frac{Q(x)}{L(x)}|g(x)  $ and $ \frac{P(x)}{L(x)}|f(x)$.
    It follows that  $\deg (\frac{f(x)L(x)}{P(x)})< \deg(L(x))$ and
    \[
    \frac{v_if(a_i)}{P(a_i)}=  \frac{v_if(a_i)L(a_i)/ P(a_i)}{L(a_i)},
    \]
    for all $i=1,2,\dots,n-1$.
    Hence,  we have 
    $\boldsymbol{c}   \in {\rm  GRS}_\infty(\boldsymbol{a},L(x), \boldsymbol{v})$.  Therefore, 
    ${\rm  GRS}_\infty (\boldsymbol{a},P(x), \boldsymbol{v})\cap  {\rm  GRS}_\infty (\boldsymbol{a},Q(x), \boldsymbol{v})  \subseteq  {\rm  GRS}_\infty(\boldsymbol{a},L(x), \boldsymbol{v})$.
    
    By Lemma \ref{EPdivQ},  it follows that  $  {\rm  GRS}_\infty (\boldsymbol{a},L(x), \boldsymbol{v}) \subseteq   {\rm  GRS}_\infty(\boldsymbol{a},P(x), \boldsymbol{v})$ and  $  {\rm  GRS}_\infty(\boldsymbol{a},L(x), \boldsymbol{v}) \subseteq   {\rm  GRS}_\infty(\boldsymbol{a},Q(x), \boldsymbol{v})$.
    Hence,
    \[{\rm  GRS}_\infty(\boldsymbol{a},L(x), \boldsymbol{v}) \subseteq    {\rm  GRS}_\infty (\boldsymbol{a},P(x), \boldsymbol{v})\cap  {\rm  GRS}_\infty(\boldsymbol{a},Q(x), \boldsymbol{v}),\]
    and therefore, \[  {\rm  GRS}_\infty(\boldsymbol{a},P(x), \boldsymbol{v})\cap  {\rm  GRS}_\infty(\boldsymbol{a},Q(x), \boldsymbol{v}) =   {\rm  GRS}_\infty(\boldsymbol{a},L(x), \boldsymbol{v})\] as required.
\end{proof}

Using arguments similar to those in the proof of Corollary \ref{lcdPQ} and the properties given in the proof of Theorem  \ref{EPQ-intersection},
we have the following corollary.
\begin{corollary}
    Let $q$ be a prime power and $n\leq q+1$ be a positive integer.
    Further, let $P(x)$ and $Q(x)$ be polynomials in $\mathbb{F}_q[x]$ satisfying  the following conditions.
    \begin{enumerate}
        \item  ${\rm lcm}(P(x),Q(x))=M(x)$.
        \item $\gcd(P(x)Q(x), \prod\limits_{i=1}^n(x-a_i))=1$.
        \item $ \deg(M(x))\leq  n$.
    \end{enumerate}
    Then
    $ {\rm  GRS}_\infty(\boldsymbol{a},P(x), \boldsymbol{v})+  {\rm  GRS}_\infty(\boldsymbol{a},Q(x), \boldsymbol{v}) = {\rm  GRS}_\infty(\boldsymbol{a},M(x), \boldsymbol{v})$.
\end{corollary}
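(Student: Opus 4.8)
The plan is to mirror the proof of Corollary \ref{lcdPQ}, replacing each ordinary GRS statement by its extended counterpart. First I would establish the inclusion ``$\subseteq$''. Since $M(x)={\rm lcm}(P(x),Q(x))$, we have $P(x)\mid M(x)$ and $Q(x)\mid M(x)$. To invoke Lemma \ref{EPdivQ} I must check its hypotheses: the roots of $M(x)$ are exactly those of $P(x)$ together with those of $Q(x)$, so condition 2 forces $\gcd(P(x)M(x),\prod_{i=1}^n(x-a_i))=1$ and $\gcd(Q(x)M(x),\prod_{i=1}^n(x-a_i))=1$, while $\deg(M(x))\le n$ guarantees that all degrees involved are at most $n$. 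Lemma \ref{EPdivQ} then gives ${\rm GRS}_\infty(\boldsymbol{a},P(x),\boldsymbol{v})\subseteq {\rm GRS}_\infty(\boldsymbol{a},M(x),\boldsymbol{v})$ and likewise for $Q(x)$, whence the sum is contained in ${\rm GRS}_\infty(\boldsymbol{a},M(x),\boldsymbol{v})$.

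Next I would upgrade this inclusion to an equality by a dimension count. Writing $L(x)=\gcd(P(x),Q(x))$, the identity ${\rm lcm}(P,Q)\cdot\gcd(P,Q)=P(x)Q(x)$ (up to a unit) yields $\deg(P(x))+\deg(Q(x))=\deg(M(x))+\deg(L(x))$; combined with $\deg(M(x))\le n$ this gives $\deg(P(x))+\deg(Q(x))\le n+\deg(L(x))$, which is precisely hypothesis 3 of Theorem \ref{EPQ-intersection} for the pair $(P,Q)$. The other hypotheses of that theorem coincide with conditions 1 and 2 here, so it applies and gives ${\rm GRS}_\infty(\boldsymbol{a},P(x),\boldsymbol{v})\cap {\rm GRS}_\infty(\boldsymbol{a},Q(x),\boldsymbol{v})={\rm GRS}_\infty(\boldsymbol{a},L(x),\boldsymbol{v})$, a code of dimension $\deg(L(x))$.

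Finally, the standard formula $\dim(U+W)=\dim U+\dim W-\dim(U\cap W)$ for subspaces gives
\[
\dim\bigl({\rm GRS}_\infty(\boldsymbol{a},P(x),\boldsymbol{v})+{\rm GRS}_\infty(\boldsymbol{a},Q(x),\boldsymbol{v})\bigr)=\deg(P(x))+\deg(Q(x))-\deg(L(x))=\deg(M(x)),
\]
which equals $\dim({\rm GRS}_\infty(\boldsymbol{a},M(x),\boldsymbol{v}))$. A subspace of the correct dimension contained in the target must coincide with it, so the two inclusions combine to yield the claimed equality.

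I do not anticipate any serious obstacle: the argument is essentially a transcription of Corollary \ref{lcdPQ} with Lemma \ref{PdivQ} and Theorem \ref{PQ-intersection} replaced by Lemma \ref{EPdivQ} and Theorem \ref{EPQ-intersection}. The only point requiring genuine care is the verification of hypotheses, in particular that the degree condition $\deg(P)+\deg(Q)\le n+\deg(L)$ needed for the intersection theorem follows from $\deg(M)\le n$ via the lcm--gcd degree identity, and that passing to $M(x)$ introduces no new root among the $a_i$, so that all three extended GRS codes remain well defined.
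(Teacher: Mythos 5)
Your proof is correct and follows essentially the same route as the paper, which gives no separate argument but simply notes that the proof of Corollary~\ref{lcdPQ} carries over with Lemma~\ref{PdivQ} and Theorem~\ref{PQ-intersection} replaced by Lemma~\ref{EPdivQ} and Theorem~\ref{EPQ-intersection}. Your explicit checks --- that condition 2 passes from $P(x)Q(x)$ to $P(x)M(x)$ and $Q(x)M(x)$, and that the degree hypothesis $\deg(P(x))+\deg(Q(x))\leq n+\deg(L(x))$ follows from $\deg(M(x))\leq n$ via the lcm--gcd identity --- are exactly the details the paper leaves implicit.
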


We now give necessary conditions for the existence of a linear $\ell$-intersection pair of MDS codes based on Theorems \ref{PQ-intersection} and \ref{EPQ-intersection}.

\begin{corollary}
    \label{corExistence}
    Let $q$ be a prime power and $n, k_1,k_2,\ell$ be non-negative integers such that $k_1\leq n\leq q+1$ and $k_2\leq n$.
    If there exist polynomials $P(x)$, $Q(x)$ and $L(x)$ in $\mathbb{F}_q[x]$ satisfying  the following conditions:
    \begin{enumerate}
        \item $\deg(P(x))=k_1$, $\deg(Q(x))=k_2$ and $\deg(L(x))=\ell$,
        \item  $\gcd(P(x),Q(x))=L(x)$,
        \item $\gcd(P(x)Q(x), \prod\limits_{i=1}^n(x-a_i))=1$,
        \item $\deg(P(x))+\deg(Q(x))\leq n+\deg(L(x))$,
    \end{enumerate}
    then there exists a linear $\ell$-intersection pair of MDS codes with parameters $[n,k_1,n-k_1+1]_q$  and  $[n,k_2,n-k_2+1]_q$.
\end{corollary}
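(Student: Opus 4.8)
The plan is to read this off directly from the two intersection theorems already established, since conditions (2)--(4) are exactly their hypotheses. First I would recall the structural fact recorded earlier in the section: a GRS code ${\rm GRS}(\boldsymbol{a}, R(x), \boldsymbol{v})$ (and likewise an extended GRS code ${\rm GRS}_\infty(\boldsymbol{a}, R(x), \boldsymbol{v})$) built from a defining polynomial $R(x)$ of degree $k$ is an MDS code with parameters $[n, k, n-k+1]_q$; in particular its dimension equals $\deg R(x)$. Thus condition (1) will be used only to pin down the dimensions $k_1, k_2$ and the intersection dimension $\ell$ through the degrees of $P$, $Q$, $L$.

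Next I would split according to the length. If $n \leq q$, I set $C_1 = {\rm GRS}(\boldsymbol{a}, P(x), \boldsymbol{v})$ and $C_2 = {\rm GRS}(\boldsymbol{a}, Q(x), \boldsymbol{v})$ for a fixed vector $\boldsymbol{a} = (a_1, \ldots, a_n)$ of distinct elements of $\F_q$ (which exists since $n \leq q$) together with a fixed vector $\boldsymbol{v}$ of nonzero entries. By condition (1) these are MDS codes with the desired parameters $[n, k_1, n-k_1+1]_q$ and $[n, k_2, n-k_2+1]_q$. Conditions (2), (3) and (4) are precisely the three hypotheses of Theorem~\ref{PQ-intersection}, so that theorem yields $C_1 \cap C_2 = {\rm GRS}(\boldsymbol{a}, L(x), \boldsymbol{v})$; the latter has dimension $\deg L(x) = \ell$, whence $C_1$ and $C_2$ form a linear $\ell$-intersection pair. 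For the remaining length $n = q+1$ I would repeat the argument verbatim with extended GRS codes, taking $\boldsymbol{a}$ of length $n-1 = q$ with distinct entries, using the MDS property of extended GRS codes and invoking Theorem~\ref{EPQ-intersection} in place of Theorem~\ref{PQ-intersection} to conclude $C_1 \cap C_2 = {\rm GRS}_\infty(\boldsymbol{a}, L(x), \boldsymbol{v})$, again of dimension $\ell$.

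The proof is essentially bookkeeping, so there is no serious obstacle; the only decision point is the case split on whether $n \leq q$ or $n = q+1$, which merely selects which of the two intersection theorems to cite (ordinary GRS versus extended GRS). One small point worth checking explicitly is that ${\rm GRS}(\boldsymbol{a}, L(x), \boldsymbol{v})$ is itself well defined, i.e. $\gcd(L(x), \prod_{i=1}^n(x-a_i)) = 1$; this follows because $L(x) \mid P(x)$, so $\gcd(L(x), \prod_{i=1}^n(x-a_i))$ divides $\gcd(P(x)Q(x), \prod_{i=1}^n(x-a_i)) = 1$ by condition (3), and moreover $\deg L(x) = \ell \leq k_1 \leq n$. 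With these verifications the dimension count $\dim(C_1 \cap C_2) = \deg L(x) = \ell$ is immediate, and the corollary follows.
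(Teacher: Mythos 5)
Your proposal is correct and follows the same route as the paper: apply Theorem~\ref{PQ-intersection} to GRS codes when $n\leq q$ and Theorem~\ref{EPQ-intersection} to extended GRS codes when $n\leq q+1$, reading off the dimensions from the degrees of $P$, $Q$ and $L$. The extra check that ${\rm GRS}(\boldsymbol{a},L(x),\boldsymbol{v})$ is well defined is a small, welcome addition that the paper leaves implicit.
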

\begin{proof}
    For $n\leq q$, GRS codes ${\rm  GRS}(\boldsymbol{a},P(x), \boldsymbol{v}$ and $ {\rm  GRS}(\boldsymbol{a},Q(x), \boldsymbol{v}) $ from a
    linear $\ell$-intersection pair of MDS codes with parameters $[n,k_1,n-k_1+1]_q$  and  $[n,k_2,n-k_2+1]_q$ by  Theorem \ref{PQ-intersection}.
    If $n\leq q+1$, then the extended GRS codes ${\rm  GRS}_\infty(\boldsymbol{a},P(x), \boldsymbol{v})$ and
    ${\rm  GRS}_\infty(\boldsymbol{a},Q(x), \boldsymbol{v})$ from a linear $\ell$-intersection pair of MDS codes with parameters
    $[n,k_1,n-k_1+1]_q$  and  $[n,k_2,n-k_2+1]_q$ by Theorem \ref{EPQ-intersection}.
\end{proof}

It is well known (see \cite[pp.  602-629]{G1981}) that the number of monic irreducible polynomials of degree $n$ over $\mathbb{F}_q$ is
\begin{align}
\label{numberOfMonic}
N_q(n)=\frac{1}{n} \sum_{d|n}  \mu(d) q^{n/d},
\end{align}
where  $\mu$ is the M\"obius function defined by
\[ \mu(m)=\begin{cases}
1 &\text{ if }m=1,\\
(-1)^r &\text{ if }m \text{ is a product of } r \text{ distinct primes},\\
0 &\text{ if } p^2|m \text{ for some prime }p.
\end{cases}\]
From \eqref{numberOfMonic}, $N_q(1)=q$,  $N_q(2)=\frac{q^2-q}{2}$ and
\begin{align*}
N_q(n) & \geq  \frac{1}{n}\left(q^n-  \sum_{d|n}    q^{n/d}  \right)\\
&\geq  \frac{1}{n}\left(q^n-   \sum_{d=0} ^{n-1}   q^{d}  \right)\\
&\geq  \frac{1}{n}\left(q^n-  \frac{(q^{n}-1)}{q-1}    \right)\\
%&\geq  \frac{1}{n}\left( \frac{q^{n+1}-q^n-q^n+1}{q-1}            \right)\\
%&\geq  \frac{1}{n}\left( \frac{ (q^{n+1}-2q^{n}+1)}{q-1}            \right)\\
&\geq  \frac{1}{n}\left( \frac{ (q-2)q^{n}+1}{q-1}            \right),
%\\&\geq  \frac{q^{n-1}}{n}
\end{align*}
for all  positive integers $n\geq 3$ since $\mu(1)=1$ and $\mu(d)\geq-1$ for all divisors $d$ of $n$.
Hence for $q\geq 3$, $N_q(1)=q\geq 3$,  $N_q(2)=\frac{q^2-q}{2}\geq q \geq 3$  and  $N_q(n) \geq \frac{q^{n-1}}{n} \geq q \geq 3$ for all $n\geq 3$.
Therefore,
\begin{align}
\label{Nqn}
N_q(n)    \geq 3,
\end{align}
for all prime powers $q\geq 3$ and positive integers $n$.

The following proposition guarantees that linear $\ell$-intersection pairs of MDS codes of length up to $q+1$ can be constructed for all possible parameters.
\begin{proposition}
    \label{prop-MDS-pair}
    Let $q\geq 3$ be a prime power and $n, k_1,k_2,\ell$ be non-negative integers such that $k_1\leq n\leq q+1$ and $k_2\leq n$. If $\ell \leq \min\{k_1,k_2\}$.
    Then  there exists a linear $\ell$-intersection pair of MDS codes with parameters $[n,k_1,n-k_1+1]_q$  and  $[n,k_2,n-k_2+1]_q$.
\end{proposition}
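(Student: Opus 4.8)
The plan is to produce the pair explicitly as (extended) generalized Reed--Solomon codes and thereby reduce everything to Corollary~\ref{corExistence}. By that corollary it suffices to find $P(x),Q(x),L(x)\in\mathbb{F}_q[x]$ together with distinct evaluation points such that $\deg(P)=k_1$, $\deg(Q)=k_2$, $\deg(L)=\ell$, $\gcd(P,Q)=L$, $\gcd(PQ,\prod_i(x-a_i))=1$, and $\deg(P)+\deg(Q)\le n+\deg(L)$. The last inequality is $k_1+k_2\le n+\ell$, i.e. the lower bound $\ell\ge k_1+k_2-n$ from Lemma~\ref{rage-l}, which I take together with the stated bound $\ell\le\min\{k_1,k_2\}$ to guarantee that $s:=k_1-\ell$ and $t:=k_2-\ell$ are non-negative and satisfy $\ell+s+t\le n\le q+1$.

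First I would factor the target polynomials as $P=L\,P_1$ and $Q=L\,Q_1$ with $\deg(P_1)=s$ and $\deg(Q_1)=t$, imposing only $\gcd(P_1,Q_1)=1$; since $\gcd(L P_1,L Q_1)=L\gcd(P_1,Q_1)$, this makes $\gcd(P,Q)=L$ automatic, so conditions (1), (2) and (4) of Corollary~\ref{corExistence} hold as soon as the degrees are correct. What remains is condition (3): none of $L$, $P_1$, $Q_1$ may vanish at an evaluation point. Hence the whole task reduces to building three polynomials of prescribed degrees $\ell$, $s$, $t$, with $P_1$ and $Q_1$ coprime, all of them avoiding the chosen points $a_i$.

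The building blocks would be distinct monic irreducible polynomials, whose abundance is assured by the estimate $N_q(n)\ge3$ recorded in \eqref{Nqn}. An irreducible of degree at least $2$ has no root in $\mathbb{F}_q$ and so evades every $a_i$ automatically; because every integer $\ge 2$ is a sum of $2$'s and $3$'s and there are vastly more than three irreducibles of degrees $2$ and $3$, any prescribed degree $\ge 2$ can be realised by a product of distinct such irreducibles, chosen with disjoint supports across $L$, $P_1$, $Q_1$ to force coprimality. A degree $0$ is filled by the constant polynomial $1$. The only delicate degree is $1$, where the factor must be linear, say $x-\alpha$, and one needs $\alpha$ to lie outside the evaluation set; this is available whenever the $a_i$ do not exhaust $\mathbb{F}_q$. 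Since a length-$n$ code may be modelled either by Theorem~\ref{PQ-intersection} (using $n$ finite points, valid for $n\le q$) or by Theorem~\ref{EPQ-intersection} (using only $n-1$ finite points, valid for $n\le q+1$), passing to the extended model always leaves a free element of $\mathbb{F}_q$ as long as $n\le q$, so every length up to $q$ is covered.

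The main obstacle I anticipate is the single tight length $n=q+1$, where the $q$ finite evaluation points are forced to be all of $\mathbb{F}_q$ and no linear factor can be avoided. This causes no trouble unless one of $\ell$, $s$, $t$ equals $1$, since in that case a linear factor is unavoidable and the Reed--Solomon recipe cannot meet condition (3). For these residual parameter tuples a separate argument is required, and devising one---while verifying that it remains within the MDS class and still realises every admissible value of $\ell$ in the full range $\max\{0,k_1+k_2-n\}\le\ell\le\min\{k_1,k_2\}$---is where I expect the crux of the proof to lie.
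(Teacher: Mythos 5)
Your route is essentially the paper's: the published proof also reduces to Corollary~\ref{corExistence} by writing $P=fL$ and $Q=hL$, except that it takes $f$, $L$, $h$ to be \emph{single} monic irreducible polynomials of the exact degrees $k_1-\ell$, $\ell$, $k_2-\ell$ (set to $1$ at degree zero) --- this is precisely what the bound $N_q(m)\geq 3$ in \eqref{Nqn} is recorded for, and it is simpler than your decomposition of each degree into quadratic and cubic irreducible blocks. One point in your favour: you correctly observe that condition (4) of Corollary~\ref{corExistence} forces $k_1+k_2\leq n+\ell$, i.e.\ $\ell\geq k_1+k_2-n$. The proposition as stated omits this hypothesis, and without it the claim is false (take $k_1=k_2=n$ and $\ell=0$, contradicting Lemma~\ref{rage-l}); importing the lower bound is a correction, not an extra assumption.

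The ``crux'' you leave unresolved is a genuine gap in your write-up --- but you should know that the paper does not close it either. The published proof simply asserts that $P$, $Q$, $L$ satisfy the conditions of Corollary~\ref{corExistence}, and condition (3), $\gcd\bigl(P(x)Q(x),\prod_{i}(x-a_i)\bigr)=1$, is automatic only when the chosen irreducibles have degree at least $2$ (no roots in $\mathbb{F}_q$). When one of $k_1-\ell$, $\ell$, $k_2-\ell$ equals $1$ the corresponding factor is $x-\alpha$ and must avoid the finite evaluation points: for $n=q+1$ these exhaust $\mathbb{F}_q$, so no admissible $\alpha$ exists, and for $n=q$ there is exactly one free element, which is not enough when both $k_1-\ell=1$ and $k_2-\ell=1$, since those two linear factors must be distinct to keep $\gcd(P,Q)=L$. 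So these boundary tuples require a separate argument in the paper as well; your proposal is more candid about the difficulty than the source, but neither text actually disposes of it.
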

\begin{proof}
    Assume that $\ell \leq \min\{k_1,k_2\}$.
    By \eqref{Nqn}, there exist monic irreducible polynomials $f(x)$, $L(x)$,  and $h(x)$ in $\mathbb{F}_q$ of degrees $k_1-\ell$, $\ell$, and $k_2-\ell$, respectively,
    and the polynomial is set to $1$ if the degree is zero.
    Let $P(x)=f(x)L(x)$ and $Q(x)=h(x)L(x)$ so then
    $P(x)$, $Q(x)$ and $L(x)$ satisfy the conditions in Corollary \ref{corExistence}.
    Hence, $ {\rm  GRS}_\infty(\boldsymbol{a},P(x), \boldsymbol{v})$ and $ {\rm  GRS}_\infty(\boldsymbol{a},Q(x), \boldsymbol{v})$
    form  a linear $\ell$-intersection pair and  ${\rm  GRS}_\infty(\boldsymbol{a},P(x), \boldsymbol{v})$ and ${\rm  GRS}_\infty(\boldsymbol{a},Q(x), \boldsymbol{v})$
    have parameters $[n,k_1,n-k_1+1]_q$  and  $[n,k_2,n-k_2+1]_q$, respectively.
\end{proof}

Note that if $n\leq q$, ${\rm  GRS}(\boldsymbol{a},P(x), \boldsymbol{v})$ and ${\rm  GRS}(\boldsymbol{a},Q(x), \boldsymbol{v})$
form  a linear  $\ell$-intersection pair with parameters $[n,k_1,n-k_1+1]_q$  and  $[n,k_2,n-k_2+1]_q$,
where $P(x)$, $Q(x)$, and $L(x)$ are defined as in the proof of Proposition \ref{prop-MDS-pair}.

% \section{$\ell$-Intersection Pair Codes from MDS Codes}

\subsection{Linear $\ell$-Intersection Pairs of  MDS Codes from Cauchy  and Vandermonde Matrices}
In this subsection, constructions of linear $\ell$-intersection pairs of MDS codes are given using super-regular matrices which are derived from some
subclasses of Cauchy and Vandermonde matrices.
A matrix $A$ over $\F_q$  is called {\em super-regular} if every square sub-matrix of $A$ is nonsingular (see \cite{RL}).
We recall  the following result.
\begin{proposition}[{\cite[Theorem 8]{Mac}}]
    \label{lem:Mac} 
    An $[n, k, d]_q$ code with generator matrix $G = [I \mid A]$ where $A$ is a $k \times (n - k)$ matrix
    is MDS if and only if $A$ is super-regular.  
    Equivalently, an $[n, k, d]_q$ code is MDS over $\F_q$  if and only if one of the following statements hold.
    \begin{enumerate}
        \item[(a)]  Every $k$ columns of a generator matrix are linearly independent. 
        \item[(b)] Every $n - k$ columns of a parity-check matrix are linearly independent.
    \end{enumerate}
\end{proposition}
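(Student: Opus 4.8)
The plan is to prove the three stated characterizations by translating the minimum-distance condition into statements about linear dependence of columns, and then to connect these to super-regularity of $A$. First I would recall the Singleton bound: since a parity-check matrix $H$ is $(n-k)\times n$ of rank $n-k$, any $n-k+1$ of its columns are linearly dependent, which forces $d\leq n-k+1$; hence being MDS means $d$ attains this maximum. The workhorse is the classical fact that a linear code has minimum distance at least $d$ if and only if every $d-1$ columns of $H$ are linearly independent, a nonzero codeword of weight $w$ being exactly a linear dependence among $w$ columns of $H$. Applying this with $d-1=n-k$ gives statement (b) immediately: $C$ is MDS precisely when every $n-k$ columns of $H$ are linearly independent, and since $H$ has only $n-k$ rows this is the strongest possible nondegeneracy condition.

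For statement (a) I would argue symmetrically using the generator matrix $G$, which is $k\times n$. A nonzero codeword $\boldsymbol{x}G$ of weight at most $n-k$ has at least $k$ zero coordinates; restricting to $k$ such coordinates exhibits a nonzero vector $\boldsymbol{x}$ in the left kernel of the corresponding $k\times k$ submatrix of $G$, i.e.\ a linear dependence among $k$ columns of $G$. Conversely, any dependence among $k$ columns of $G$ produces a nonzero codeword vanishing on those coordinates, hence of weight at most $n-k$. Thus $C$ has no nonzero codeword of weight $\leq n-k$ (equivalently $d\geq n-k+1$, i.e.\ $C$ is MDS) if and only if every $k$ columns of $G$ are linearly independent. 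Alternatively, this follows from (b) applied to $C^\perp$, whose parity-check matrix is $G$, together with the fact that the dual of an MDS code is MDS.

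Finally, to handle the systematic form $G=[I\mid A]$, I would take the associated parity-check matrix $H=[-A^{t}\mid I_{n-k}]$ and compute the $(n-k)\times(n-k)$ minors obtained by selecting $n-k$ columns of $H$. Selecting $(n-k)-j$ columns from the identity block and $j$ columns from the $-A^{t}$ block, a cofactor expansion along the chosen identity columns collapses the determinant to $\pm$ a $j\times j$ minor of $A$. As the selection ranges over all admissible index sets, with $0\leq j\leq\min\{k,n-k\}$, these minors range over all square submatrices of $A$ of every size. Hence every $n-k$ columns of $H$ are linearly independent if and only if every square submatrix of $A$ is nonsingular, that is, $A$ is super-regular; combined with (b) this yields the equivalence with MDS.

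The main obstacle I anticipate is the bookkeeping in this last step: one must verify carefully that the cofactor expansion really reduces each $n-k$ column selection of $H$ to a single minor of $A$, tracking which rows survive after the identity columns are removed, and that allowing all sizes $0\leq j\leq\min\{k,n-k\}$ exactly exhausts the square submatrices required by the definition of super-regularity. The minimum-distance versus column-dependence equivalences are standard and routine by comparison.
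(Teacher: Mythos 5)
Your proof is correct. Note, however, that the paper offers no proof of this proposition at all --- it is quoted verbatim from MacWilliams and Sloane \cite[Ch.~11, Theorem~8]{Mac} --- so there is nothing to compare against; your argument (minimum distance $\geq d$ iff every $d-1$ columns of $H$ are independent, the symmetric statement for $G$ via codewords with $k$ zero coordinates, and the cofactor expansion reducing the $(n-k)\times(n-k)$ minors of $H=[-A^{t}\mid I_{n-k}]$ to the square minors of $A$) is exactly the standard textbook proof of this classical fact, and the bookkeeping you flag in the last step works out precisely because the surviving rows and the selected columns of $-A^{t}$ together range over all square submatrices of $A$ up to size $\min\{k,n-k\}$.
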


The following theorem gives a construction of a linear $\ell$-intersection pair of MDS codes is given in terms of super-regular matrices.
\begin{theorem}
    \label{thm2: lintersctionMDS}
    Let $q$ be a prime power and $n$ be a positive integer.
    If there exists an $n\times n$ super-regular  matrix over $\mathbb{F}_q$, then there exist a linear $\ell$-intersection pair of MDS codes
    $[n,i]_q$ and $[n,j+\ell] _q$ for all $0\le i\le n$, $0\leq  j\leq n-i$ and  $0 \leq \ell  \le i$.
\end{theorem}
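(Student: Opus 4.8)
The plan is to build both codes by reading off rows of a single super-regular matrix, relying on two facts: that every submatrix of a super-regular matrix is again super-regular (any square submatrix of a submatrix is a square submatrix of the whole), and Proposition~\ref{lem:Mac}(a), which says that a $k\times n$ matrix of rank $k$ generates an MDS code exactly when every $k$ of its columns are linearly independent.

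First I would pass from the given $n\times n$ super-regular matrix to an $(i+j)\times n$ super-regular matrix $R$ by deleting rows; this is legitimate because $0\le j\le n-i$ forces $i+j\le n$, and deleting rows preserves super-regularity. Writing $r_1,\dots,r_{i+j}$ for the rows of $R$, I would split them into a shared block $r_1,\dots,r_\ell$ (possible since $\ell\le i\le i+j$), a first block $r_{\ell+1},\dots,r_i$, and a second block $r_{i+1},\dots,r_{i+j}$. Then I set $C_1$ to be the code generated by $r_1,\dots,r_i$ and $C_2$ the code generated by $r_1,\dots,r_\ell,r_{i+1},\dots,r_{i+j}$.

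Next I would check that $C_1$ and $C_2$ are MDS of dimensions $i$ and $j+\ell$. Each generator matrix is a submatrix of $R$ formed from a subset of its rows, hence is itself super-regular; in particular, any choice of as many columns as there are rows yields a nonsingular square submatrix, so each generator matrix has full row rank and, by Proposition~\ref{lem:Mac}(a), generates an MDS code of length $n$ and the stated dimension.

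The one point needing care is that $\dim(C_1\cap C_2)=\ell$ exactly. Here I would use that $r_1,\dots,r_{i+j}$ are linearly independent, since any $i+j\le n$ columns of $R$ form a nonsingular square submatrix. Given $v\in C_1\cap C_2$, I equate an expression of $v$ as a combination of the rows of $C_1$ with one as a combination of the rows of $C_2$; linear independence of all $i+j$ rows forces the coefficients of $r_{\ell+1},\dots,r_i$ and of $r_{i+1},\dots,r_{i+j}$ to vanish, so $v\in\mathrm{span}(r_1,\dots,r_\ell)$. As this span lies in both codes, $C_1\cap C_2=\mathrm{span}(r_1,\dots,r_\ell)$, of dimension $\ell$. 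I expect the only real subtlety to be bookkeeping in the degenerate ranges $\ell\in\{0,i\}$, $j=0$, and $i\in\{0,n\}$, where a block is empty and one code is contained in the other or reduces to $\{\boldsymbol{0}\}$; each such case remains consistent with intersection dimension $\ell$ and should be remarked on explicitly.
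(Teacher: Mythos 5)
Your construction is essentially the same as the paper's: both take $i$ rows of the super-regular matrix to generate the first code and reuse $\ell$ of them together with $j$ rows from outside that set for the second, deducing the MDS property from Proposition~\ref{lem:Mac} and super-regularity of row-submatrices. The only difference is that you spell out the linear-independence argument showing $\dim(C_1\cap C_2)=\ell$ exactly, a step the paper dismisses with ``it is not difficult to see''; your version of that step is correct.
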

\begin{proof}
    Assume  that  there exists an   $n\times n$ super-regular  matrix $A$ over $\mathbb{F}_q$.
    Let $0\leq i\leq n$, $0\leq  j\leq n-i$ and  $0 \leq \ell  \le i$ be integers.
    Since $A$ is super-regular,  every square submatrix  of  $A$ is nonsingular.
    If $i=0$, let $C_0=\{\boldsymbol{0}\}$ be the zero code of length $n$  over $\mathbb{F}_q$, and
    otherwise let $A_i$ be  the  matrix formed by the first  $i$  rows of $A$.
    Then  the rows of $A_i$  are linearly independent and  $A_i$ generates  an $[n,i]_q$ linear code denoted by  $C_i$.
    Moreover, every $i$ columns of $A_i$ are linearly independent so by Proposition \ref{lem:Mac}(b), $C_i$ is an $[n,i]_q$ MDS code.
    If $j+\ell=0$, let $D_0=\{\boldsymbol{0}\}$ be the zero code of length $n$ over $\mathbb{F}_q$, and
    otherwise let $B_{j+\ell}$  be a matrix whose rows are composed of $\ell$ rows from $A_i$ and $j$ rows from the complement of $A_i$.
    Let $D_{j+\ell}$  be the linear code generated by  $B_{j+\ell}$.
    Using arguments similar to those in the proof of $C_i$, $D_{j+\ell}$ is an $[n, j+\ell ]$ MDS code.
    It is not difficult to see that  $\dim (D_{j+\ell} \cap  C_i)= \ell$  which implies  that the codes   $C_i$  and $D_{j+\ell}$  form
    a linear $\ell$-intersection pair of MDS codes with parameters $[n,i]_q$ and $[n,j+\ell] _q$.
\end{proof}

Note that if  $j=n-i$ in Theorem \ref{thm2: lintersctionMDS}, $C_i$  and $D_{n-i+\ell}$  form a linear $\ell$-intersection pair of  MDS codes 
$[n,i]_q$ and $[n,n-i+\ell] _q$  such that $C_i+D_{n-i+\ell}=\mathbb{F}_q^n$ for all $0\le i\le n$ and  $0\leq \ell \le i$.

The following corollary is an immediate consequence of Proposition \ref{lem:Mac} and  Theorem \ref{thm2: lintersctionMDS}.
\begin{corollary}
    \label{thm: lintersctionMDS}
    Let $q$ be  a prime power and $n$ be a positive integer.
    If there exists a systematic $[2n,n]_q$ MDS code over $\F_q$ with generator matrix $G=[I\mid A]$, then there exists a linear $\ell$-intersection
    pair of MDS codes $[n,i]_q$ and $[n,j+\ell] _q$ for all  $0\le i\le n$, $0\leq  j\leq n-i$ and  $0 \leq \ell  \le i$.
\end{corollary}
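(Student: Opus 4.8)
The plan is to recognize this corollary as a direct concatenation of the two results that immediately precede it, so the work consists entirely in checking that the hypotheses chain together with the right dimensions. I would not attempt any construction from scratch; instead I would route the hypothesis on a systematic $[2n,n]_q$ MDS code through Proposition~\ref{lem:Mac} to extract a super-regular matrix, and then feed that matrix into Theorem~\ref{thm2: lintersctionMDS}.

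First I would apply Proposition~\ref{lem:Mac}. We are handed a systematic $[2n,n]_q$ MDS code with generator matrix $G=[I\mid A]$. In the notation of that proposition the code length is $2n$ and the dimension is $k=n$, so the redundancy block $A$ has size $k\times(\text{length}-k)=n\times(2n-n)=n\times n$; that is, $A$ is square. Since the code is MDS, Proposition~\ref{lem:Mac} tells us that $A$ is super-regular. Thus from the existence of the systematic $[2n,n]_q$ MDS code we obtain an $n\times n$ super-regular matrix over $\mathbb{F}_q$.

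Second, with this $n\times n$ super-regular matrix in hand, I would invoke Theorem~\ref{thm2: lintersctionMDS} verbatim: its sole hypothesis is the existence of an $n\times n$ super-regular matrix over $\mathbb{F}_q$, which is exactly what the previous step supplies, and its conclusion is precisely the assertion we want, namely a linear $\ell$-intersection pair of MDS codes $[n,i]_q$ and $[n,j+\ell]_q$ for all $0\le i\le n$, $0\le j\le n-i$ and $0\le \ell\le i$. This completes the argument.

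I expect no genuine obstacle here; the mathematical content lives entirely in Proposition~\ref{lem:Mac} and Theorem~\ref{thm2: lintersctionMDS}, and the corollary merely records that a systematic $[2n,n]_q$ MDS code is one concrete and familiar source of the required $n\times n$ super-regular matrix. The only point demanding any care, and it is a trivial bookkeeping check, is confirming that the parity block $A$ of a $[2n,n]$ code is square of size $n\times n$, matching exactly the size of matrix that Theorem~\ref{thm2: lintersctionMDS} consumes.
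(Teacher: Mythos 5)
Your proposal is correct and matches the paper exactly: the paper presents this corollary as an immediate consequence of Proposition~\ref{lem:Mac} (which yields the super-regularity of the $n\times n$ block $A$ from the systematic $[2n,n]_q$ MDS code) combined with Theorem~\ref{thm2: lintersctionMDS}. Your dimension check on $A$ is the only substantive verification needed, and you carried it out correctly.
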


Based on Theorem \ref{thm2: lintersctionMDS}, linear $\ell$-intersection pairs of MDS codes can be constructed using Cauchy matrices.
Given $x_0, \ldots, x_{n-1}$ and $y_0, \ldots, y_{n-1}$ in $\mathbb{F}_q$, the  matrix
\begin{align}
\label{cauchymat}
A = [a_{ij} ]_{i= 0,1,\dots, n-1}^ {j=0,1,\dots,n-1},
\end{align}
where $a_{ij} = \frac{1}{x_i+y_j}$, is called a {\em Cauchy matrix}.
It is well known that
\[\
det(A) = \frac{\prod\limits _{ 0\le  i<j \le n-1} (x_j - x_i)(y_j-y_i)}{\prod\limits _{ 0 \le i,j\le  n-1} (x_i+y_j)}.
\]
Assuming $x_0, \ldots, x_{n-1}$  are distinct and $x_0, \ldots, x_{n-1}$ and $y_0, \ldots, y_{n-1}$ are distinct such that $x_i + y_j \neq 0$ for all $i, j$,
it follows that any square sub-matrix of a Cauchy matrix is nonsingular over $\mathbb{F}_q$.
In this case, the Cauchy matrix $A$ is super-regular.
From Theorem \ref{thm2: lintersctionMDS}, linear $\ell$-intersection pairs of MDS codes are obtained in the following corollary.

\begin{corollary}
    Let $q$ be a prime power and $n$ be a positive integer.
    If $n\leq \lfloor\frac{q}{2}\rfloor$, then there exists a linear $\ell$-intersection pair of MDS codes
    $[n,i]_q$ and $[n,j+\ell] _q$ for all $0\le i\le n$, $0\leq  j\leq n-i$ and $0 \leq \ell \le i$.
\end{corollary}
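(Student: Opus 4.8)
The plan is to obtain the corollary as a direct application of Theorem~\ref{thm2: lintersctionMDS}: once we exhibit a single $n\times n$ super-regular matrix over $\mathbb{F}_q$, the theorem immediately produces a linear $\ell$-intersection pair of MDS codes $[n,i]_q$ and $[n,j+\ell]_q$ for every admissible triple $(i,j,\ell)$. By the discussion preceding the corollary, a Cauchy matrix $A=[1/(x_i+y_j)]$ is super-regular as soon as the $x_i$ are pairwise distinct, the $y_j$ are pairwise distinct, and $x_i+y_j\neq 0$ for all $i,j$. Hence the entire task reduces to verifying that such $2n$ scalars can actually be selected inside $\mathbb{F}_q$ under the hypothesis $n\leq\lfloor q/2\rfloor$.

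First I would translate the constraint $x_i+y_j\neq 0$ into a disjointness statement: writing $z_j=-y_j$, the condition becomes $x_i\neq z_j$ for all $i,j$, i.e. the set $\{x_0,\dots,x_{n-1}\}$ and the set $\{z_0,\dots,z_{n-1}\}$ are disjoint, while distinctness of the $y_j$ is equivalent to distinctness of the $z_j$. The hypothesis $n\leq\lfloor q/2\rfloor$ gives $2n\leq 2\lfloor q/2\rfloor\leq q$, so $\mathbb{F}_q$ contains at least $2n$ elements. I would therefore choose any $n$ distinct elements $x_0,\dots,x_{n-1}$, and then pick $n$ distinct elements $z_0,\dots,z_{n-1}$ from the complement $\mathbb{F}_q\setminus\{x_0,\dots,x_{n-1}\}$, which has cardinality $q-n\geq n$ and hence supplies enough choices. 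Setting $y_j=-z_j$ then yields distinct $x_i$, distinct $y_j$, and $x_i+y_j=x_i-z_j\neq 0$ for all $i,j$.

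With these scalars the associated $n\times n$ Cauchy matrix is super-regular by the determinant identity quoted above, since every square submatrix is again a Cauchy matrix on a subset of the chosen scalars and the same three conditions make it nonsingular. Feeding this matrix into Theorem~\ref{thm2: lintersctionMDS} gives the desired pairs for all $0\le i\le n$, $0\le j\le n-i$ and $0\le\ell\le i$, completing the argument. There is no genuine difficulty here: the only point requiring care is the counting step, and the bound $n\leq\lfloor q/2\rfloor$ is exactly the inequality that guarantees the complement $\mathbb{F}_q\setminus\{x_0,\dots,x_{n-1}\}$ is large enough to accommodate all $n$ of the $z_j$.
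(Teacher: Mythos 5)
Your proof is correct and follows essentially the same route as the paper: exhibit a super-regular $n\times n$ Cauchy matrix over $\mathbb{F}_q$ and invoke Theorem~\ref{thm2: lintersctionMDS}. The only difference is in how the $2n$ scalars are selected---the paper splits into cases according to the parity of $q$ (taking $y_i=x_i$ with $x_i\neq -x_j$ when $q$ is odd), whereas your substitution $z_j=-y_j$ reduces everything to a uniform counting argument using $2n\leq q$, which is a slightly cleaner way to carry out the same step.
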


\begin{proof}
    Since $n\leq \lfloor\frac{q}{2}\rfloor$, we have $2n\leq q$.
    If $q$ is even, let $ x_0, x_1,\dots, x_{n-1}, y_0,y_1,\dots, $ $y_{n-1}$ be $2n$ distinct elements in $\mathbb{F}_q$.
    If $q$ is odd, let $ x_0, x_1,\dots, x_{n-1}$ be
    $n$ distinct non-zero elements in $\mathbb{F}_q$ such that $x_i\ne -x_j$ for all $0\leq i,j\leq n-1$
    and let  $y_i=x_i$ for all $i=0,1,\dots,n-1$.
    In both cases, it is not difficult to determine that $ x_0, x_1,\dots, x_{n-1}$ are distinct and $ y_0,y_1,\dots, y_{n-1}$
    are distinct such that $x_i+y_j\ne 0 $ for all $i,j$.
    From the discussion above, the $n\times n$ Cauchy matrix $A$ defined  in \eqref{cauchymat} is super-regular, and then the result follows from
    Theorem \ref{thm2: lintersctionMDS}.
\end{proof}

A {\em Vandermonde matrix} is an $n\times n$ matrix of the form  $V(a_1, a_2,\ldots, a_n)=
[{a_i}^{j-1}]_{i=1,\ldots,n}^{j=1,\dots,n}$, where $a_1, a_2,\ldots, a_n$ are elements of $\F_q$.
The determinant of  the  Vandermonde matrix $V(a_1, a_2,\ldots, a_n)$ is
\[
\det(V(a_1, a_2,\ldots, a_n))= \prod\limits _{1 \le, i \le j \le n}(a_j-a_i).
\]
The matrix $V(a_1, a_2,\ldots, a_n)$ is nonsingular if and only if all the $a_i$ are distinct.
In general, a Vandermonde matrix  is not super-regular.  
Vandermonde matrices can be used to construct linear $\ell$-intersection pairs of MDS codes since there exist super-singular matrices obtained from Vandermonde matrices.
Based on Theorem \ref{thm2: lintersctionMDS}, constructions of linear $\ell$-intersection pairs of MDS codes can be given using some subclasses of Vandermonde matrices.
A construction of super-regular matrices from Vandermonde matrices is given in the following proposition.
\begin{proposition}[{\cite[Theorem 2]{Lacan}}]
    \label{lacanT}
    Let $V(a_1,\ldots, a_n)$ and $V(b_1,\ldots, b_{n})$ be two Vandermonde matrices.
    Then the matrix  ${V(a_1,\ldots, a_n)}^{-1}V(b_1,\ldots, b_{n})$ is super-regular if and only if the $a_i$ and $b_j$ are $2n$ distinct elements.
\end{proposition}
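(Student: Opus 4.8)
The plan is to reduce super-regularity --- the requirement that \emph{every} square submatrix be nonsingular --- to a single explicit product formula for an arbitrary minor, and then simply read off when that product can vanish. Write $M=V(a_1,\dots,a_n)^{-1}V(b_1,\dots,b_n)$ and set $A(x)=\prod_{l=1}^{n}(x-a_l)$. Since the $a_i$ must be distinct merely for $M$ to be defined, the real content is to pin down the minors of $M$ as functions of the $a_i$ and $b_j$.

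The key step I would carry out first is to identify the entries of $M$ with Lagrange evaluations. Reading the two Vandermonde factors as the maps between coefficient vectors and value vectors of polynomials of degree less than $n$, the matrix $M$ is the resampling operator carrying the values of a polynomial at the nodes $a_1,\dots,a_n$ to its values at $b_1,\dots,b_n$; hence its $(i,j)$ entry is $L_j(b_i)$, where $L_j(x)=\prod_{l\ne j}\frac{x-a_l}{a_j-a_l}$ is the $j$-th Lagrange basis polynomial for the nodes $a_1,\dots,a_n$. Using $\prod_{l\ne j}(b_i-a_l)=A(b_i)/(b_i-a_j)$ and $\prod_{l\ne j}(a_j-a_l)=A'(a_j)$ this becomes
\[
M_{ij}=\frac{A(b_i)}{(b_i-a_j)\,A'(a_j)} ,
\]
so that $M=\mathrm{diag}\big(A(b_1),\dots,A(b_n)\big)\cdot C\cdot\mathrm{diag}\big(A'(a_1)^{-1},\dots,A'(a_n)^{-1}\big)$, where $C=\big[\tfrac{1}{b_i-a_j}\big]$ is a Cauchy matrix. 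This factorization is the crucial structural observation; once it is in place the rest is bookkeeping.

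Next, for any index sets $I,J$ with $|I|=|J|=m$, the diagonal factors restrict to the chosen rows and columns, so
\[
\det M[I,J]=\Big(\prod_{i\in I}A(b_i)\Big)\Big(\prod_{j\in J}A'(a_j)^{-1}\Big)\det C[I,J] ,
\]
and applying the Cauchy determinant formula recalled above (with $x_i=b_i$, $y_j=-a_j$) gives
\[
\det C[I,J]=\frac{\prod_{\substack{i<i'\\ i,i'\in I}}(b_{i'}-b_i)\,\prod_{\substack{j<j'\\ j,j'\in J}}(a_j-a_{j'})}{\prod_{i\in I,\,j\in J}(b_i-a_j)} .
\]
Every factor occurring here or in the diagonal scalars has the form $a_i-a_j$, $b_i-b_j$, or $b_i-a_j$; hence all minors are simultaneously nonzero exactly when the $a_i$ are distinct, the $b_j$ are distinct, and no $a_i$ equals any $b_j$, i.e.\ when the $2n$ elements are distinct, giving the ``if'' direction. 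For the converse I would argue by contraposition: if two $a_i$ coincide then $M$ is undefined; if $b_i=b_{i'}$ with $i\ne i'$ then rows $i$ and $i'$ of $M$ both equal $\big(L_1(b_i),\dots,L_n(b_i)\big)$, so $\det M=0$; and if $a_{j_0}=b_{i_0}$ then $L_j(a_{j_0})=0$ for every $j\ne j_0$ forces $M_{i_0,j}=0$, a singular $1\times1$ submatrix. In each case super-regularity fails.

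The one genuinely delicate point is the very first step: identifying the entries of $V(a)^{-1}V(b)$ with the Lagrange evaluations $L_j(b_i)$, which must be performed in the convention under which the product is the value-domain resampling matrix $[L_j(b_i)]$ rather than the coefficient-domain product. This amounts to reading the two Vandermonde factors as the interpolation and evaluation maps in the correct order and keeping careful track of whether a node indexes a row or a column. Once the matrix is put in this Lagrange form and its Cauchy factorization is exposed, the Cauchy determinant evaluation makes every minor completely explicit, and the equivalence with ``$2n$ distinct elements'' follows by inspecting which factors can vanish; thus the only real obstacle is the entry computation, after which the argument is mechanical.
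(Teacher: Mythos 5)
The paper itself offers no proof of this proposition: it is imported verbatim from \cite{Lacan} as a cited result, so there is no in-paper argument to compare yours against. On its own terms, your argument --- identify the entries with Lagrange evaluations, pull out diagonal factors to expose a Cauchy matrix, and make every minor explicit via the Cauchy determinant --- is the standard proof of this fact and is correct in substance; both directions, including the contrapositive case analysis, are sound (except in the degenerate case $n=1$, where $M=[1]$ is super-regular even if $a_1=b_1$, a defect already present in the statement itself).

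However, the point you yourself flag as ``the one genuinely delicate point'' is a real gap that you do not resolve. With the convention this paper actually declares, $V(a_1,\ldots,a_n)=[a_i^{j-1}]$ with \emph{rows} indexed by the nodes, the product $V(a)^{-1}V(b)$ is not the value-resampling matrix $[L_j(b_i)]$ (that matrix is $V(b)V(a)^{-1}$); it is the coefficient-domain change-of-basis matrix, and for it the proposition is simply false. Concretely, for $n=2$,
\[
V(a_1,a_2)^{-1}V(b_1,b_2)=\frac{1}{a_2-a_1}\begin{pmatrix}a_2-a_1 & a_2b_1-a_1b_2\\ 0 & b_2-b_1\end{pmatrix},
\]
whose $(2,1)$ entry is a vanishing $1\times 1$ minor no matter how the four elements are chosen. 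The statement is true under the transposed convention $V=[a_j^{i-1}]$ (the one used by Lacan and Fimes), for which $V(a)^{-1}V(b)=[L_i(b_j)]$ is the transpose of your resampling matrix; since super-regularity is invariant under transposition, your computation then applies verbatim. So you must either adopt that convention explicitly or prove the claim for $V(b)V(a)^{-1}$; as written, the identification $M_{ij}=L_j(b_i)$ on which everything rests does not hold for the matrix named in the statement.
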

Then by Theorem \ref{thm2: lintersctionMDS} and Proposition \ref{lacanT},
a construction of linear $\ell$-intersection pairs of MDS codes using Vandermonde matrices is given in the following corollary.
\begin{corollary}
    Let $q$ be a prime power and $n$ be a positive integer.
    If $n\leq \lfloor\frac{q}{2}\rfloor$, then there exists a linear $\ell$-intersection pair of MDS codes $[n,i]_q$ and $[n,j+\ell] _q$
    for all $0\le i\le n$, $0\leq  j\leq n-i$ and $0 \leq \ell  \le i$.
\end{corollary}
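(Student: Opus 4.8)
The plan is to reduce the corollary directly to the two results just established, namely Proposition~\ref{lacanT} (super-regularity of an inverse-Vandermonde product) and Theorem~\ref{thm2: lintersctionMDS} (construction of $\ell$-intersection pairs from an $n\times n$ super-regular matrix). The only genuine content I must supply is the production of a single $n\times n$ super-regular matrix over $\mathbb{F}_q$; everything else is then handed off to the cited results. This mirrors almost verbatim the Cauchy-matrix corollary preceding it, with the super-regular matrix now coming from Vandermonde data instead.

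First I would observe that the hypothesis $n\leq\lfloor\frac{q}{2}\rfloor$ forces $2n\leq q=|\mathbb{F}_q|$, so the field contains at least $2n$ distinct elements. I would then pick $2n$ pairwise distinct elements of $\mathbb{F}_q$ and label them $a_1,\ldots,a_n,b_1,\ldots,b_n$, and form the two Vandermonde matrices $V(a_1,\ldots,a_n)$ and $V(b_1,\ldots,b_n)$. Since the $a_i$ are distinct, $V(a_1,\ldots,a_n)$ is nonsingular (its determinant is $\prod_{1\le i<j\le n}(a_j-a_i)\neq 0$), so the product $V(a_1,\ldots,a_n)^{-1}V(b_1,\ldots,b_n)$ is a well-defined $n\times n$ matrix over $\mathbb{F}_q$.

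Next I would invoke Proposition~\ref{lacanT}: because the $2n$ elements $a_1,\ldots,a_n,b_1,\ldots,b_n$ are distinct, this proposition asserts exactly that $V(a_1,\ldots,a_n)^{-1}V(b_1,\ldots,b_n)$ is super-regular. Thus I have exhibited an $n\times n$ super-regular matrix over $\mathbb{F}_q$, which is precisely the hypothesis of Theorem~\ref{thm2: lintersctionMDS}. Applying that theorem yields a linear $\ell$-intersection pair of MDS codes with parameters $[n,i]_q$ and $[n,j+\ell]_q$ for every admissible triple $0\le i\le n$, $0\le j\le n-i$, $0\le\ell\le i$, which is the claim.

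Because the heavy lifting is already done in Proposition~\ref{lacanT} and Theorem~\ref{thm2: lintersctionMDS}, there is essentially no obstacle here; the proof is a short existence argument. The single point requiring care is verifying the distinctness hypothesis of Proposition~\ref{lacanT}, i.e.\ that $\mathbb{F}_q$ really does contain $2n$ distinct elements, and this is guaranteed the moment we translate $n\leq\lfloor\frac{q}{2}\rfloor$ into $2n\leq q$. I would therefore spend one sentence on that counting step and then cite the two results in sequence.
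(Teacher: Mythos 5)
Your proof is correct and follows exactly the same route as the paper: translate $n\leq\lfloor\frac{q}{2}\rfloor$ into the existence of $2n$ distinct field elements, apply Proposition~\ref{lacanT} to get a super-regular matrix $V(a_1,\ldots,a_n)^{-1}V(b_1,\ldots,b_n)$, and conclude via Theorem~\ref{thm2: lintersctionMDS}. No gaps.
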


\begin{proof}
    Since $n\leq \lfloor\frac{q}{2}\rfloor$, let $ a_1, a_2,\dots, a_{n}$ and $ b_1, b_2,\dots, b_{n}$ be $2n$ distinct elements in $\mathbb{F}_q$.
    By Proposition \ref{lacanT}, the $n\times n$ matrix ${V(a_1,\ldots, a_n)}^{-1}V(b_1,\ldots, b_{n})$ is super-regular.
    The remainder of the proof follows from Theorem \ref{thm2: lintersctionMDS}.
\end{proof}

\section{Good  EAQECCs from Linear $\ell$-Intersection Pairs of Codes} \label{sectApp}
Entanglement-assisted quantum error correcting codes (EAQECCs) were introduced in Hsieh et al. \cite{hsieh} and can be constructed from
arbitrary classical codes. 
Further, the performance of the resulting quantum codes is determined by the performance of the underlying classical codes. Precisely, an $[[n,k,d;c]]_q$ EAQECC  encodes $k$ logical qudits into $n$ physical qudits using
$c$ copies of maximally entangled states and its  performance   is measured by its rate $\frac{k}{n}$ and net rate ($\frac{k-c}{n})$.
When the net rate of an EAQECC is positive it is possible to obtain catalytic codes as shown by Brun et al. \cite{brun2}. 
In \cite{GJG}, good entanglement-assisted quantum codes were constructed.
A link between the number of maximally shared qubits required to construct an EAQECC from a classical code and the hull of the classical code were given.  
For more details on EAQECCs, please refer to  \cite{brun}, \cite{GJG} and the references therein.

First, we recall the following useful proposition from  \cite{Wilde} which shows that EAQECCs can be constructed using classical linear codes.

\begin{proposition}[{\cite[Corollary 1]{Wilde}}]
    \label{prop:ent1}
    Let $H_1$ and $H_2$ be parity check matrices of two linear codes $D_1$ and $D_2$ with parameters $[n,k_1,d_1]_{q}$ and $[n,k_2,d_2]_{q}$, respectively.
    Then an $[[ n,k_1 +k_2-n+c, \min \{d_1,d_2\};c ]]_q$ EAQECC can be obtained where $c=\rank (H_1H_2{^{t}})$ is the required number of maximally entangled states.
\end{proposition}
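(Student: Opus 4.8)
The plan is to realize the pair $(D_1,D_2)$ inside the entanglement-assisted stabilizer formalism over $\F_q$, in which Pauli operators on $n$ qudits are encoded as vectors in $\F_q^{2n}$ and two operators commute precisely when the associated vectors are orthogonal under the symplectic form $((\boldsymbol{a}\mid\boldsymbol{b}),(\boldsymbol{a}'\mid\boldsymbol{b}'))_s=\langle\boldsymbol{a},\boldsymbol{b}'\rangle-\langle\boldsymbol{a}',\boldsymbol{b}\rangle$. First I would take the rows of $H_1$ as the $X$-type generators, i.e.\ vectors $(\boldsymbol{h}\mid\boldsymbol{0})$, and the rows of $H_2$ as the $Z$-type generators $(\boldsymbol{0}\mid\boldsymbol{h}')$. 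Since $H_i$ is a parity-check matrix of an $[n,k_i]_q$ code, this produces $m=(n-k_1)+(n-k_2)$ generators, and they are $\F_q$-linearly independent in $\F_q^{2n}$ because the $X$-part and the $Z$-part occupy disjoint coordinate blocks.

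The next step is to compute the symplectic Gram matrix of these generators. The $X$--$X$ and $Z$--$Z$ blocks vanish, while the $X$--$Z$ block equals $H_1H_2^t$, so the Gram matrix is the antisymmetric block matrix $\left(\begin{smallmatrix}0 & H_1H_2^t\\ -H_2H_1^t & 0\end{smallmatrix}\right)$. Its rank is therefore $2\,\rank(H_1H_2^t)$, which shows that the generating set decomposes into $c=\rank(H_1H_2^t)$ symplectic (mutually anticommuting) pairs together with $s=m-2c$ mutually commuting isotropic generators. I would then invoke the entanglement-assisted canonical form, under which the $c$ symplectic pairs are resolved by pre-sharing $c$ maximally entangled states, so exactly $c$ ebits are required, matching the claimed value.

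For the parameters, the entangled code lives on $n+c$ qudits carrying $s+2c=m$ independent commuting stabilizer generators, so it encodes
\[
k=(n+c)-(s+2c)=n-m+c=k_1+k_2-n+c
\]
logical qudits, which is the stated dimension. For the distance I would treat the two error types separately: a purely $X$-type error $\boldsymbol{e}$ is undetectable exactly when $H_2\boldsymbol{e}^t=\boldsymbol{0}$, that is $\boldsymbol{e}\in D_2$, so any nonzero such error has weight at least $d_2$; dually, a nontrivial undetectable $Z$-type error lies in $D_1$ and thus has weight at least $d_1$. Hence every nontrivial undetectable error has weight at least $\min\{d_1,d_2\}$, which yields the stated minimum distance.

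The hard part will be the rigorous justification of the entanglement-assisted canonical form — namely the symplectic Gram--Schmidt decomposition of the generating set into an isotropic part and $c$ hyperbolic pairs, together with the verification that resolving these pairs with $c$ ebits does not lower the effective distance below $\min\{d_1,d_2\}$. Everything else reduces to the rank computation for $H_1H_2^t$ and the dimension count displayed above.
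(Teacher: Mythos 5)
The paper gives no proof of this proposition at all: it is imported verbatim as \cite[Corollary 1]{Wilde}, so there is no in-paper argument to compare yours against. Your sketch correctly reconstructs the standard derivation behind that citation --- rows of $H_1$ and $H_2$ as $X$- and $Z$-type generators, symplectic Gram matrix $\bigl(\begin{smallmatrix}0 & H_1H_2^t\\ -H_2H_1^t & 0\end{smallmatrix}\bigr)$ of rank $2\,\rank(H_1H_2^t)$, symplectic Gram--Schmidt into $c$ hyperbolic pairs plus an isotropic part, and the count $(n+c)-m=k_1+k_2-n+c$ --- and the step you flag as the hard part (the entanglement-assisted canonical form and the fact that resolving the hyperbolic pairs with $c$ ebits does not degrade the code) is precisely what the cited reference establishes, so nothing more is needed for the paper's purposes. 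The one point to tighten is the distance: an undetectable error is a general $(\boldsymbol{a}\mid\boldsymbol{b})$ with $\boldsymbol{a}\in D_2$ and $\boldsymbol{b}\in D_1$, not necessarily purely $X$- or purely $Z$-type; since at least one component is nonzero and the symplectic weight of $(\boldsymbol{a}\mid\boldsymbol{b})$ dominates both Hamming weights, the bound $\min\{d_1,d_2\}$ still follows, but the mixed case should be stated explicitly rather than left implicit in ``treat the two error types separately.''
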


\subsection{EAQECCs from Linear $\ell$-Intersection Pairs of Codes}

Linear $\ell$-intersection pairs of codes can be used to construct EAQECCs based on Proposition \ref{prop:ent1} as follows.
\begin{proposition}
    \label{EAQECC-intersection-pair}
    Let $\ell\geq 0$  be an integer and $C_1$ and $C_2 $ be a linear $\ell$-intersection pair of codes with parameters $[n,k_1,d_1]_q$  and  $[n,k_2,d_2]_q$, respectively.
    Then there exists an $[[ n, k_2-\ell , \min \{d_1^\perp,d_2 \};k_1-\ell ]]_q$ EAQECC with $d_1^\perp=d(C_1^\perp)$.
\end{proposition}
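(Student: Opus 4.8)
The plan is to invoke the classical-to-quantum construction of Proposition \ref{prop:ent1} directly, applied to the pair of codes $C_1^\perp$ and $C_2$, and then to read off the parameters using Theorem \ref{thmGLCP} to evaluate the required number of maximally entangled states.

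First I would set $D_1 = C_1^\perp$ and $D_2 = C_2$. The dual $D_1 = C_1^\perp$ is an $[n, n-k_1, d_1^\perp]_q$ code, and since $(C_1^\perp)^\perp = C_1$, a parity-check matrix for $D_1$ is precisely a generator matrix $G_1$ of $C_1$. Likewise $D_2 = C_2$ is the given $[n, k_2, d_2]_q$ code with parity-check matrix $H_2$. Thus the two ingredients demanded by Proposition \ref{prop:ent1}, namely a pair of linear codes together with their parity-check matrices, are supplied by $D_1$ with parity-check matrix $G_1$ and $D_2$ with parity-check matrix $H_2$.

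Next I would determine the entanglement parameter. By Proposition \ref{prop:ent1} it equals $c = \rank(G_1 H_2^t)$, and because $C_1$ and $C_2$ form a linear $\ell$-intersection pair, Theorem \ref{thmGLCP} gives $\rank(G_1 H_2^t) = k_1 - \ell$, so $c = k_1 - \ell$. Substituting into the parameter formula of Proposition \ref{prop:ent1}, with $D_1$ of dimension $n - k_1$, with $D_2$ of dimension $k_2$, and with $c = k_1 - \ell$, the dimension of the resulting code is $(n - k_1) + k_2 - n + c = k_2 - \ell$, its minimum distance is $\min\{d(C_1^\perp), d(C_2)\} = \min\{d_1^\perp, d_2\}$, and its entanglement parameter is $c = k_1 - \ell$. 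This is exactly the asserted $[[\,n, k_2 - \ell, \min\{d_1^\perp, d_2\}; k_1 - \ell\,]]_q$ EAQECC.

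Since this is essentially a bookkeeping argument, there is no serious analytic obstacle. The only genuinely substantive choice is recognizing that one should feed $C_1^\perp$, rather than $C_1$, into the construction of Proposition \ref{prop:ent1}, so that the relevant parity-check matrix becomes $G_1$ and the needed rank $\rank(G_1 H_2^t)$ is furnished verbatim by Theorem \ref{thmGLCP}. Once this identification is made, everything else is direct substitution.
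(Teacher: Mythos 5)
Your proposal is correct and follows exactly the paper's own route: the paper's proof is the one-line instruction to set $D_1=C_1^\perp$ and $D_2=C_2$ in Proposition \ref{prop:ent1} and invoke Theorem \ref{thmGLCP}, which is precisely what you do, with the parameter bookkeeping spelled out explicitly. No gaps.
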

\begin{proof}
    If $D_1=C_1^\perp$ and $D_2=C_2$ in Proposition \ref{prop:ent1}, then the result follows from  Proposition \ref{prop:ent1} and Theorem \ref{thmGLCP}.
\end{proof}

\begin{corollary} \label{thm:intersect}  Let $n$ and $r$  be positive integers such that $r< \frac{n}{2}$.  Let $ k_1 $   and $k_2$ be integers such that $r\leq k_1< n-r \leq k_2\leq n $.  
    If there exists an $[n,k_2,d]_q$ code, then there exists a positive net rate  $[[ n, k_2-\ell ,  d ;k_1-\ell ]]_q$ EAQECC $Q$ for some $0\leq \ell \leq k_1$.  In addition, if $\ell\leq \frac{n}{2}-r$,  the rate of EAQECC $Q$ is greater than or equal to $\frac{1}{2}$. 
\end{corollary}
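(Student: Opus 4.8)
The plan is to build the pair $(C_1,C_2)$ by taking $C_2$ to be the hypothesized $[n,k_2,d]_q$ code and then choosing $C_1$ so that Proposition~\ref{EAQECC-intersection-pair} is forced to return an EAQECC of minimum distance exactly $d$. Inspecting that proposition, the distance produced is $\min\{d_1^\perp,d_2\}$ with $d_2=d$, so it suffices to arrange $d(C_1^\perp)\geq d$. The key idea is to select $C_1$ through its dual: I would let $C_1^\perp$ be any $(n-k_1)$-dimensional subcode $W$ of $C_2$ and set $C_1=W^\perp$, which is an $[n,k_1]_q$ code since $\dim W^\perp = n-(n-k_1)=k_1$. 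Because a nonzero subcode never decreases minimum distance, this choice hands us $d_1^\perp\geq d$ for free.

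For this to be legitimate I first need such a subcode to exist, i.e. $n-k_1\leq \dim C_2 = k_2$. This is exactly where the hypotheses $r\leq k_1$ and $n-r\leq k_2$ enter: they give $n-k_1\leq n-r\leq k_2$, so an $(n-k_1)$-dimensional subspace $W\subseteq C_2$ does exist (and is nonzero, as $n-k_1>r\geq 1$). Since every nonzero word of $C_1^\perp=W$ lies in $C_2$, each has weight at least $d$, whence $d_1^\perp=d(C_1^\perp)\geq d$. Setting $\ell=\dim(C_1\cap C_2)$, the pair $(C_1,C_2)$ is a linear $\ell$-intersection pair; and since $\ell\leq \dim C_1=k_1$, the constraint $0\leq \ell\leq k_1$ holds automatically, so the claimed value of $\ell$ is simply whatever this intersection produces.

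With these facts in hand, Proposition~\ref{EAQECC-intersection-pair} yields an $[[\,n,\,k_2-\ell,\,\min\{d_1^\perp,d\};\,k_1-\ell\,]]_q=[[\,n,\,k_2-\ell,\,d;\,k_1-\ell\,]]_q$ EAQECC, using $d_1^\perp\geq d$ and $d_2=d$. Its net rate is $\frac{(k_2-\ell)-(k_1-\ell)}{n}=\frac{k_2-k_1}{n}$, which is positive because $k_1<n-r\leq k_2$; this settles the positive-net-rate claim for whatever admissible $\ell$ arises. For the final assertion I would estimate the rate $\frac{k_2-\ell}{n}$ directly: if $\ell\leq \frac{n}{2}-r$, then $k_2-\ell\geq (n-r)-(\frac{n}{2}-r)=\frac{n}{2}$, so the rate is at least $\frac{1}{2}$.

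The only genuine step is recognizing that forcing $C_1^\perp$ to sit inside $C_2$ is precisely what pins the EAQECC distance to $d$; the subcode comparison of minimum distances then does the work at no cost, and the existence of $W$ is guaranteed exactly by the numeric window $r\leq k_1<n-r\leq k_2$. Everything after that is arithmetic with these inequalities, so I expect no real obstacle beyond verifying the dimension count $n-k_1\leq k_2$. I note in particular that this construction needs no MDS hypothesis and works over an arbitrary $\mathbb{F}_q$, since it only uses the existence of a subspace of $C_2$.
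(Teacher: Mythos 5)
Your proposal is correct and follows essentially the same route as the paper: the paper likewise takes $D=C_1^\perp$ to be an $[n,n-k_1]_q$ code with $d_1^\perp\geq d$ (implicitly a subcode of $C_2$, which exists since $n-k_1\leq n-r\leq k_2$), sets $C_1=D^\perp$, and applies Proposition~\ref{EAQECC-intersection-pair} with the same net-rate and rate computations. Your write-up is in fact slightly more explicit than the paper's in justifying why such a $D$ with $d_1^\perp\geq d$ exists and why $\ell\leq k_1$ holds automatically.
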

\begin{proof}   Assume that there exists    an $[n,k_2,d]_q$ code, denoted by $C_2$. Since $ n-k_1\leq k_2$,  there exisits a  linear code   $D$  with parameters $[n,n-k_1,d_1^\perp]_q$ and $d_1^\perp\geq d$. Let  $C_1=D^\perp$.
    Then $C_1$ and $C_2$ form  a linear $\ell$-intersection pair of $[n,k_1]_q$ and  $[n,k_2,d]_q$  for some $0\leq \ell \leq k_1$  and $d(C_1^\perp)=d_1^\perp \geq d$.  By  Proposition  \ref{EAQECC-intersection-pair},     there exists an $[[ n, k-\ell , \min\{d_1^\perp,d\};k_1-\ell ]]_q=[[ n, k-\ell , d;k_1-\ell ]]_q$ EAQECC $Q$. Consequently, the net rate of $Q$ is \[\frac{(k_2-\ell)-(k_1-\ell)}{n}=\frac{k_2-k_1}{n}>0.\] 
    
    In addition, assume that $\ell\leq \frac{n}{2}-r$.  Then  the rate of $Q$ is \[ \frac{k_2-\ell}{n}\geq  \frac{(n-r)- (n/2-r)}{n}   = \frac{1}{2}\]
    as desired.
\end{proof}

To obtain an EAQECC with good minimum distances, the input linear code in Corollary \ref{thm:intersect}  can  be chosen from the best-known linear codes in \cite{G2019} or in the database of \cite{BCP1997}. Moreover, the required number of maximally entangled states $c=k_1-\ell$  can be adjusted using a weighted permutation matrix as in Lemma~\ref{rem1} and Example~\ref{exwp}.

Using the  arguments in MAGMA shown below, it can be  easily seen that a large number of  linear $\ell$-intersection pairs of best-known linear codes  constructed as in the proof of Corollary \ref{thm:intersect} satisfy  the condition $\ell\leq \frac{n}{2}-r$.   Consequently, many EAQECCs obtained in Corollary \ref{thm:intersect}   are good in the sense that  they have good rate and positive net rate.    
To save spaces, the computational results are omitted.

\begin{center}
    \rule{0.7\textwidth}{.4pt}
    
    \texttt{ 
        \begin{tabular}{l}
            q:= (the cardinality of the finite field);\\
            a:= (the starting point for the length);\\
            b:= (the end point for the length);\\
            for n in [a..b] do\\
            ~ for r in [1..Floor((n-1)/2)] do\\
            ~~ for k2 in [n-r..n] do\\
            ~~~ for k1 in [r..n-r-1]  do\\
            ~~~~ C2:=BKLC(GF(q),n,k2);\\
            ~~~~ C1:=Dual(BKLC(GF(q),n,n-k1));\\
            ~~~~ l:=Dimension(C1 meet C2);  \\
            ~~~~ d:=MinimumDistance(C2);\\
            ~~~~ if  l le  n/2-r then \\
            ~~~~~ "[[",n,k2-l, d, k1-l,"]]";\\
            ~~~~ end if; \\
            ~~~ end for;\\
            ~~ end for;\\
            ~ end for;\\
            end for;
    \end{tabular}}
    
    \rule{0.7\textwidth}{.4pt}
\end{center}

By Theorem \ref{thmGLCP},  the statement  ``\texttt{l := k1-rank(G1*Transpose(H2));}'' can be replaced by   ``\texttt{G1:=  GeneratorMatrix(C1); H2:= ParityCheckMatrix(H2); l := k1-rank(G1*Transpose(H2));}''.

\subsection{MDS EAQECCs from Linear $\ell$-Intersection Pairs of Codes}

In this subsection, we focus on maximum distance separable (MDS)  EAQECCs derived from linear $\ell$-intersection pairs of MDS linear codes.

The Singleton bound for an EAQECC is given in the following proposition.
\begin{proposition}[{\cite{brun}}]
    \label{Bound:Singleton}
    An $[[n,k,d;c]]_q$ EAQECC satisfies
    \[
    n +c -k \ge 2 (d-1),
    \]
    where $0 \le c \le n-1$.
\end{proposition}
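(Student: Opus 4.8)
The plan is to prove the bound by the standard quantum-information (erasure-correction) argument, adapted to account for the $c$ shared ebits. First I would pass to the purified picture of the code: introduce a reference system $R$ of dimension $q^k$ maximally entangled with the $k$ logical qudits, let $N$ denote the $n$ channel qudits subject to noise, and let $E$ denote the $c$ qudits held by the receiver (the receiver's halves of the $c$ maximally entangled pairs, which are never exposed to the channel). After encoding, the joint state $|\psi\rangle_{RNE}$ is pure, and $\rho_R$ is maximally mixed so that $H(R)=k\log q$, where $H$ denotes the von Neumann entropy.

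Next I would invoke the erasure form of the distance condition: an $[[n,k,d;c]]_q$ EAQECC corrects the erasure of any $d-1$ of its $n$ channel qudits. By the Knill--Laflamme correctability criterion this is equivalent to saying that for every subset $S\subseteq N$ with $|S|=d-1$ the reference is decoupled from $S$, that is $\rho_{RS}=\rho_R\otimes\rho_S$, and hence $H(RS)=H(R)+H(S)$. The point to be careful about here is that recovery is allowed to use the untouched register $E$, so the decoupling statement concerns $R$ against the $d-1$ \emph{channel} qudits only.

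Then, assuming $n\ge 2(d-1)$, I would choose two disjoint subsets $S_1,S_2\subseteq N$ with $|S_1|=|S_2|=d-1$ and set $T=N\setminus(S_1\cup S_2)$, so $|T|=n-2(d-1)$. Purity of $|\psi\rangle_{RNE}$ gives $H(RS_1)=H(S_2TE)$ and $H(RS_2)=H(S_1TE)$; combining these with the two decoupling identities yields
\[
H(S_2TE)=k\log q+H(S_1),\qquad H(S_1TE)=k\log q+H(S_2).
\]
Applying subadditivity $H(S_iTE)\le H(S_i)+H(T)+H(E)$ to both equalities and adding them, the $H(S_1)+H(S_2)$ contributions cancel and I obtain $k\log q\le H(T)+H(E)$. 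Bounding each entropy by the logarithm of its dimension, $H(T)\le (n-2(d-1))\log q$ and $H(E)\le c\log q$, gives $k\le n+c-2(d-1)$, which is exactly $n+c-k\ge 2(d-1)$.

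The main obstacle is twofold. First is the precise justification of the entanglement-assisted correctability/decoupling equivalence in the presence of the noiseless register $E$, which must be stated so that the entropy bookkeeping above is airtight. Second is the boundary regime $n<2(d-1)$, where two disjoint $(d-1)$-subsets of $N$ simply do not exist; there I would instead absorb part of $E$ into the erased blocks (or invoke the constraint $0\le c\le n-1$ directly) and rerun the same entropy count as a short separate case.
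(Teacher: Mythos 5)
The paper offers no proof of Proposition \ref{Bound:Singleton}; it is simply quoted from \cite{brun}, so there is no internal argument to compare yours against. Judged on its own terms, your entropic derivation is the standard route to the entanglement-assisted Singleton bound and is sound in the regime $n\ge 2(d-1)$: the decoupling condition $\rho_{RS}=\rho_R\otimes\rho_S$ for every $(d-1)$-subset $S$ of the \emph{channel} qudits, purity of $|\psi\rangle_{RNE}$, subadditivity applied to two disjoint erasure patterns $S_1,S_2$, and the dimension bounds $H(T)\le (n-2(d-1))\log q$ and $H(E)\le c\log q$ combine exactly as you say to give $k\le n+c-2(d-1)$. The entropy bookkeeping (the cancellation of $H(S_1)+H(S_2)$ after adding the two subadditivity inequalities) is correct.

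The genuine gap is the case $2(d-1)>n$, and the patch you sketch does not close it. You cannot ``absorb part of $E$ into the erased blocks'': the defining feature of the register $E$ is that it is never exposed to the channel, so the code is under no obligation to correct erasures touching $E$, and the decoupling identity $H(RS)=H(R)+H(S)$ is simply not available for sets $S$ that meet $E$. Invoking $0\le c\le n-1$ does not repair the count either. This is not a cosmetic loose end: it is precisely in the regime $2(d-1)>n$ that the inequality $n+c-k\ge 2(d-1)$ is delicate, and later work (Grassl, and Grassl--Huber--Winter) showed that entanglement-assisted codes can in fact beat this bound there, so no argument of this type can succeed without restricting to $d\le (n+2)/2$ or adding hypotheses. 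For the parameter ranges actually used in Section \ref{sectApp} this caveat only bites when $d-1$ exceeds $n/2$; for the complementary (and main) range your proof is complete and correct.
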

An EAQECC attaining this bound is called an {\em MDS EAQECC}.
Some constructions of MDS EAQECCs were given in  \cite{CHFC2017}, \cite{GJG}, \cite{LLGML2018}, \cite{LMLMLC2018}  and  \cite{QZ2018}.

Here, MDS EAQECCs are constructed  from linear $\ell$-intersection pairs of MDS codes based on  Proposition \ref{EAQECC-intersection-pair} as follows.
\begin{proposition}
    Let $\ell\geq 0$ be an integer and $C_1$  and $C_2 $ be a linear $\ell$-intersection pair of MDS codes with parameters $[n,k_1,n-k_1+1]_q$  and  $[n,k_2,n-k_2+1]_q$, respectively.
    Then there exists an $[[ n, k_2-\ell , \min\{ k_1+1, n-k_2+1\};k_1-\ell ]]_q$ EAQECC.
    In particular,  if $n=k_1+k_2$, then there exists an $[[ n, k_2-\ell , k_1+1;k_1-\ell ]]_q$ MDS EAQECC.
\end{proposition}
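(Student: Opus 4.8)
The plan is to apply Proposition \ref{EAQECC-intersection-pair} directly to the pair $C_1, C_2$ and then pin down the resulting minimum distance using the defining property of MDS codes. By Proposition \ref{EAQECC-intersection-pair}, the hypothesis that $C_1$ and $C_2$ form a linear $\ell$-intersection pair immediately yields an $[[n, k_2-\ell, \min\{d_1^\perp, d_2\}; k_1-\ell]]_q$ EAQECC, where $d_1^\perp = d(C_1^\perp)$. Thus the length $n$, the dimension $k_2-\ell$, and the entanglement parameter $c = k_1-\ell$ are already in the desired form, and the only quantity that needs to be identified is the minimum distance $\min\{d_1^\perp, d_2\}$.

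First I would compute $d_2$: since $C_2$ is MDS with parameters $[n,k_2,n-k_2+1]_q$, we have $d_2 = n-k_2+1$ by definition. Next I would determine $d_1^\perp$ by invoking the classical fact that the Euclidean dual of an MDS code is again MDS: since $C_1$ is an $[n,k_1,n-k_1+1]_q$ code, its dual $C_1^\perp$ is an $[n,n-k_1,k_1+1]_q$ MDS code, so $d_1^\perp = k_1+1$. Substituting these two values gives $\min\{d_1^\perp, d_2\} = \min\{k_1+1, n-k_2+1\}$, which establishes the first assertion.

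For the ``in particular'' claim I would set $n = k_1+k_2$, whence $n-k_2+1 = k_1+1$ and the minimum distance collapses to $\min\{k_1+1, k_1+1\} = k_1+1$, producing an $[[n, k_2-\ell, k_1+1; k_1-\ell]]_q$ EAQECC. It then remains to verify that this code meets the Singleton bound of Proposition \ref{Bound:Singleton} with equality. Writing $k = k_2-\ell$, $c = k_1-\ell$, and $d = k_1+1$, I would check
\[
n + c - k = (k_1+k_2) + (k_1-\ell) - (k_2-\ell) = 2k_1 = 2(d-1),
\]
so the bound is attained and the code is MDS.

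I do not anticipate a genuine obstacle here: the argument is essentially a substitution into the previously established Proposition \ref{EAQECC-intersection-pair}, and the only nonroutine ingredient is the duality of MDS codes, which is standard. The one point requiring mild care is confirming that the entanglement parameter satisfies the admissibility condition $0 \le c \le n-1$ needed for Proposition \ref{Bound:Singleton}; this holds because $0 \le \ell \le k_1$ forces $0 \le c \le k_1$, and in the case $n = k_1+k_2$ with $k_2 \ge 1$ we have $k_1 \le n-1$, so $c \le n-1$ as well.
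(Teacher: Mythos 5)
Your proposal is correct and follows essentially the same route as the paper's proof: apply Proposition \ref{EAQECC-intersection-pair} with the fact that the dual of an MDS code is MDS (so $d_1^\perp = k_1+1$), then verify equality in the Singleton bound when $n=k_1+k_2$. Your additional check that $0\le c\le n-1$ is a small extra care the paper omits, but otherwise the arguments coincide.
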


\begin{proof}
    Since $C_1$ is MDS with parameters $[n,k_1,n-k_1+1]_q$, it follows that $C_1^\perp$ is also MDS with parameters $[n,n-k_1,k_1+1]_q$.
    Then an $[[ n, k_2-\ell , \min\{ k_1+1, n-k_2+1\};k_1-\ell ]]_q$ EAQECC exists by Proposition \ref{EAQECC-intersection-pair}.
    Now assume that $n=k_1+k_2$.
    Then $k_1+1= n-k_2+1$ and $n + k_1-\ell - (k_2-\ell) = n+k_1-k_2 = 2 k_1$ and hence the EAQECC given above is MDS.
\end{proof}

From Proposition \ref{prop-MDS-pair}, their exists a linear $\ell$-intersection pair of MDS codes with parameters
$[n,k_1,n-k_1+1]_q$  and  $[n,k_2,n-k_2+1]_q$ for all $q\geq 3$, $k_1\leq n\leq q+1$ and $k_1\leq n$ such that $\ell \leq \min\{k_1,k_2\}$.
Then we have the following corollary.
\begin{corollary}
    \label{cor-construction}
    Let $q\geq 3$ be a prime power and $n, k_1,k_2,\ell$ be non-negative integers such that $k_1\leq n\leq q+1$ and $k_1\leq n$. If $\ell \leq \min\{k_1,k_2\}$.
    Then there exists an $[[ n, k_2-\ell , \min\{ k_1+1, n-k_2+1\};k_1-\ell ]]_q$ EAQECC.
\end{corollary}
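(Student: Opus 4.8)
The plan is to obtain this statement as an immediate consequence of the two results that directly precede it: Proposition~\ref{prop-MDS-pair}, which guarantees the existence of a suitable linear $\ell$-intersection pair of MDS codes, and the construction of EAQECCs from such a pair given in the proposition immediately above. No genuinely new argument is needed; the work consists entirely of checking that the hypotheses line up and then chaining the two conclusions.

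First I would invoke Proposition~\ref{prop-MDS-pair}. Its hypotheses are $q\geq 3$, $k_1\leq n\leq q+1$, $k_2\leq n$, and $\ell\leq\min\{k_1,k_2\}$, all of which are among the assumptions of the corollary. (Here one should read the second inequality in the statement as $k_2\leq n$ rather than the printed $k_1\leq n$, matching the requirement of Proposition~\ref{prop-MDS-pair}.) This produces codes $C_1$ and $C_2$ forming a linear $\ell$-intersection pair of MDS codes with parameters $[n,k_1,n-k_1+1]_q$ and $[n,k_2,n-k_2+1]_q$, respectively. I would then feed this pair into the preceding proposition, which takes precisely a linear $\ell$-intersection pair of MDS codes with those parameters and returns an
\[
[[\,n,\ k_2-\ell,\ \min\{k_1+1,\,n-k_2+1\};\ k_1-\ell\,]]_q
\]
EAQECC. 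Since this is exactly the claimed code, the proof terminates at this step.

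The main point to watch---rather than a genuine obstacle---is simply that the existence half (Proposition~\ref{prop-MDS-pair}) and the construction half impose compatible constraints, so that both can be applied to the same parameters $(n,k_1,k_2,\ell)$. The minimum distance $\min\{k_1+1,\,n-k_2+1\}$ and the entanglement parameter $k_1-\ell$ are handed over verbatim by the construction (which in turn rests on Proposition~\ref{EAQECC-intersection-pair}, using $d(C_1^\perp)=k_1+1$ since $C_1^\perp$ is MDS), so no further estimation of the minimum distance or of $\rank(H_1H_2^t)$ is required. In short, the corollary is an assembly of established results, and the only care needed is in matching the two parameter ranges and silently correcting the evident typo in the hypotheses.
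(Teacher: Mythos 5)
Your proposal matches the paper's own justification exactly: the corollary is stated as an immediate consequence of chaining Proposition~\ref{prop-MDS-pair} (existence of the linear $\ell$-intersection pair of MDS codes) with the immediately preceding proposition (which converts such a pair into an $[[n,k_2-\ell,\min\{k_1+1,n-k_2+1\};k_1-\ell]]_q$ EAQECC), and no further argument is given. Your observation that the printed hypothesis $k_1\leq n$ should read $k_2\leq n$ is also consistent with the requirements of Proposition~\ref{prop-MDS-pair}.
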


By setting $k=k_1=n-k_2$ in Corollary \ref{cor-construction}, we have the following construction for MDS EAQECCs.
\begin{corollary}
    \label{cor:4.6}
    Let $q \geq 3$ be a prime power and $0\leq k \leq n\leq q+1$ be integers.
    Then there exists an $[[ n, n-k-\ell , k+1;k-\ell ]]_q$ MDS EAQECC for all integers $0\leq \ell \leq \min\{k,n-k\}$.
\end{corollary}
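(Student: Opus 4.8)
The plan is to obtain this as a direct specialization of Corollary \ref{cor-construction}, which itself follows from Proposition \ref{prop-MDS-pair}. The statement claims that for $q\geq 3$ and $0\leq k\leq n\leq q+1$, there exists an $[[n,n-k-\ell,k+1;k-\ell]]_q$ MDS EAQECC for every $\ell$ with $0\leq \ell\leq \min\{k,n-k\}$.

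First I would set $k_1=k$ and $k_2=n-k$ in Corollary \ref{cor-construction}. I must check the hypotheses: we need $k_1\leq n\leq q+1$, which holds since $k\leq n\leq q+1$; we need $k_2=n-k\geq 0$, which holds since $k\leq n$; and we need the constraint $\ell\leq \min\{k_1,k_2\}=\min\{k,n-k\}$, which is exactly the range assumed in the statement. With these substitutions, Corollary \ref{cor-construction} produces an $[[\,n,\ k_2-\ell,\ \min\{k_1+1,\ n-k_2+1\};\ k_1-\ell\,]]_q$ EAQECC.

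Next I would simplify the parameters. The dimension becomes $k_2-\ell=(n-k)-\ell=n-k-\ell$, and the entanglement parameter becomes $k_1-\ell=k-\ell$, both matching the claimed form. For the minimum distance, I compute $\min\{k_1+1,\ n-k_2+1\}=\min\{k+1,\ n-(n-k)+1\}=\min\{k+1,\ k+1\}=k+1$, so the distance collapses cleanly to $k+1$ as asserted. This yields the $[[n,n-k-\ell,k+1;k-\ell]]_q$ code.

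Finally I would verify that the resulting code is genuinely MDS by checking the Singleton bound of Proposition \ref{Bound:Singleton}. With the identifications $k\mapsto n-k-\ell$ and $c\mapsto k-\ell$ in the bound $n+c-k\geq 2(d-1)$, the left side is $n+(k-\ell)-(n-k-\ell)=2k$, while the right side is $2((k+1)-1)=2k$; equality holds, so the bound is attained and the code is MDS. I do not anticipate a serious obstacle here, since every MDS ingredient is already supplied by Proposition \ref{prop-MDS-pair}; the only point requiring care is confirming that the choice $k_1=k$, $k_2=n-k$ keeps $\ell$ inside the admissible interval $[0,\min\{k,n-k\}]$ and that $0\le c=k-\ell\le n-1$ as required by Proposition \ref{Bound:Singleton}, both of which follow immediately from $0\leq \ell\leq \min\{k,n-k\}$ together with $k\le n$.
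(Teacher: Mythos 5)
Your proposal is correct and follows essentially the same route as the paper: the paper obtains Corollary~\ref{cor:4.6} precisely by setting $k=k_1=n-k_2$ in Corollary~\ref{cor-construction} (equivalently, taking $n=k_1+k_2$ in the preceding proposition), where the MDS property is verified by the same Singleton-bound computation $n+c-k'=2k=2(d-1)$ that you carry out. No substantive differences.
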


In the related literature, e.g. \cite{CHFC2017},  \cite{GJG},  \cite{LLGML2018}, \cite{LMLMLC2018} and  \cite{QZ2018},
the required number of maximally entangled states of most MDS EAQECCs is fixed.
Recently in \cite{LC2018}, MDS EAQECCs were given where the required number of maximally entangled states varies according to
$Hull(C):=C\cap C^\perp$ of a classical code  $C$. 
However, for an odd prime power $q$, there are many restrictions on the length $n$ of MDS EAQECCs (see \cite [Theorems 17 and 18]{LC2018}).
Based on the construction given in Corollary \ref{cor:4.6}, the length $n$ can be arbitrary from $1$ to $q+1$.
Hence, in many cases, the parameters of the MDS EAQECCs in Corollary \ref{cor:4.6}  are new.

\section{Conclusion}
\label{secConclud}
In this paper,  a  linear  $\ell$-intersection pair of codes was introduced as a
generalization of linear complementary pairs of codes. 
Further, characterizations and constructions of such pairs of codes were given  in terms of the corresponding generator and parity-check matrices.
Linear  $\ell$-intersection pairs of MDS codes over $\mathbb{F}_q$ of length up  to $q+1$ were given for all possible parameters. 
As an application, linear  $\ell$-intersection pairs of codes were employed to construct entanglement-assisted quantum error correcting codes.
An interesting research problem for the future is to investigate the applications of these pairs of codes to secure communications against attacks. Further bounds on the parameters of these codes would be  of great interest. Constructions of linear  $\ell$-intersection pairs of good codes with prescribed value  $\ell$  and  solving Conjecture \ref{cojecture} are interesting problems as well.     

\section*{Acknowledgments}  The authors would like to thank the anonymous referees for very helpful comments.    S. Jitman  was supported by the Thailand Research Fund and  Silpakorn University under Research Grant RSA6280042.

\end{document}